\newtheorem{theorem}{Theorem}
\newtheorem*{remark}{Remark}
\begin{document}
	\title{Exact Analytical Model of Age of Information  in Multi-source Status Update Systems with Per-source Queueing}
	\author{
		Ege~Orkun~Gamgam\\
		Electrical and Electronics Engineering Dept.\\
		Bilkent University, 06800\\
		Ankara, Turkey \\
		\texttt{gamgam@ee.bilkent.edu.tr} \\
		\And
		Nail~Akar\\
		Electrical and Electronics Engineering Dept.\\
		Bilkent University, 06800\\
		Ankara, Turkey \\
		\texttt{akar@ee.bilkent.edu.tr} \\
	}
	\maketitle


\renewcommand{\shorttitle}{}
\renewcommand{\headeright}{A Preprint - October 27, 2021}


\begin{abstract}
	We consider an information update system consisting of $N$ sources sending status packets at random instances according to a Poisson process to a remote monitor through a single server.
	We assume a heteregeneous server with exponentially distributed service times which is equipped with a waiting room holding the freshest packet from each source referred to as Single Buffer Per-Source Queueing (SBPSQ). 
	The sources are assumed to be equally important, i.e., non-weighted average AoI is used as the information freshness metric, and subsequently two symmetric scheduling policies are studied in this paper, namely First Source First Serve (FSFS) and the Earliest Served First Serve (ESFS) policies, the latter policy being proposed the first time in the current paper to the best of our knowledge.  By employing the theory of Markov Fluid Queues (MFQ), an analytical model is proposed to obtain the exact distribution of the Age of Information (AoI) for each source when the FSFS and ESFS policies are employed at the server.  Subsequently, a benchmark scheduling-free scheme named as Single Buffer with Replacement (SBR) that uses a single one-packet buffer shared by all sources is also studied with a similar but less complex analytical model. We comparatively study the performance of the three schemes through numerical examples and show that the proposed ESFS policy outperforms the other two schemes in terms of the average AoI and the age violation probability averaged across all sources, in a scenario of sources possessing different traffic intensities but sharing a common service time.
\end{abstract}

\keywords{Age of Information, Multi-source queueing model, Scheduling, Buffer management, Markov fluid queues}

\section{Introduction}
Timely delivery of the status packets has been gaining utmost importance in Internet of Things (IoT)-enabled applications \cite{IOT_Survey},\cite{IOT_1} where the information freshness of each IoT device at the destination is crucial, especially for the applications requiring real-time control and decision making. A widely studied metric for quantifying the freshness of data is the Age of Information (AoI) which stands for the time elapsed since the reception of the last status packet at the monitor. More formally, the AoI at time $t$ is defined as the random process $\Delta(t)=t-U(t)$ where $U(t)$ denotes the reception time of the last status packet at the monitor. The AoI metric was first proposed in \cite{firstStudy} for a single-source M/M/1 queueing model and since then a surge of studies followed in the context of a wide range of information update systems \cite{survey},\cite{Yates_Survey}. AoI in multi-source models sharing a single or multiple servers have also been recently studied in several works; see for example \cite{intro_Multi_1},\cite{intro_Multi_2},\cite{intro_Multi_3} and the references therein.

In this paper, we consider an information update system which consists of $N$ sources each of which asynchronously samples an independent stochastic process and subsequently sends these samples in the form of status update packets to a single remote monitor (destination) through a server as shown in Fig.~\ref{figure:system}. Information packets from source-$n, n = 1,2,\dots,N$ are generated according to a Poisson process with rate $\lambda_n$ which contains sensed data along with a time stamp. Generated packets are immediately forwarded to the server with a waiting room (queue) which can contain at most one packet (the freshest) from each source. Therefore, a packet waiting in the queue is replaced with a new fresh packet arrival from the same source. This buffer management is called SBPSQ (Single Buffer Per-Source Queueing). The server is responsible for sending the information packets to the monitor through a communication network which introduces a random service time that is exponentially distributed with parameter $\mu_n$ for source-$n$. A new packet arrival immediately starts to receive service if the server is found idle. On the other hand, SBPSQ needs to be accompanied by a scheduling policy since the server is to choose a source packet among the waiting sources upon a service completion.  In this setting, we study the following three queueing/scheduling schemes employed at the server:

\begin{figure}[tb]
	\centering{
		\includegraphics[width=0.555\linewidth]{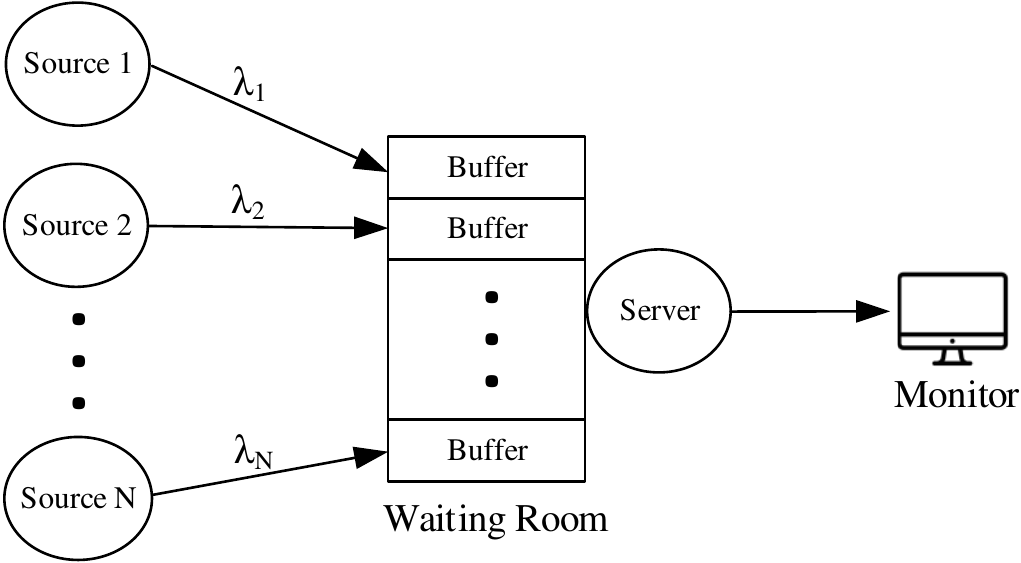}
	}
	\caption{Per-source buffering system where a remote monitor is updated by $N$ information sources through a single server}
	\label{figure:system}
\end{figure}

\begin{itemize}
	
	\item In the First Source First Serve (FSFS) policy for SBPSQ, also studied in \cite{PacketMan} for the case of 2 sources and a focus on average AoI only, is similar to a FCFS (First Come First Serve) system except that when a new packet arrival belonging to source-$n$ replaces a staler packet in the queue, the service order of that particular source stays the same. If the source-$n$ packet finds its source buffer empty, then its service order will follow the other sources in the waiting room as in FCFS.
	
	
	\item We propose the Earliest Served First Serve (ESFS) policy for SBPSQ for which the server selects a source (with an existing packet in the queue) that has not received service for the longest duration since the previous selection instant of that particular source. In the ESFS policy, the server locally holds an ordered list of sources based on their last selection instants for service. While choosing a source to serve, this ordered list is the only input for the ESFS policy in contrast with the age-based approaches that take into account of the time stamps of information packets in the queue or the instantaneous AoI values at the destination.
	\item For benchmarking purposes, we also consider a server with a one-packet buffer shared by all sources, that is studied as LCFS (Last Come First Serve) with \emph{preemption only in waiting} (LCFS-W) policy in \cite{ref_22} but with emphasis only on the average AoI. In this setting, a packet waiting in the buffer is replaced with a new packet arrival from any source. Upon a service completion, the packet held in the buffer (if it exists) starts to receive service. In our paper, we refer to this scheduling-free queueing policy as the Single Buffer with Replacement (SBR) policy.
	
\end{itemize}

The main contributions of this paper are the following:

\begin{itemize}
	\item We introduce a unifying framework based on Markov fluid queues (MFQ) to numerically obtain the exact per-source distributions of the AoI processes in matrix exponential form for FSFS, ESFS, and SBR policies for general $N$. However, the sizes of the matrices involved in the formulation increase exponentially with $N$ with the first two policies.
	
	
	\item We study and compare the performance of the three policies under several system load scenarios where the sources may have different traffic intensities but a common service time. Through numerical examples, we show that the proposed age-agnostic ESFS policy, which is quite easy to implement, outperforms the FSFS and SBR policies in terms of the average AoI and the age violation probability averaged across all sources, i.e., symmetric source AoI requirements.

\end{itemize}

The remainder of this paper is organized as follows. In Section~\ref{section:RelatedWord}, related work is given. In Section~\ref{section:MarkovFluidQueues}, MFQs are briefly presented. In Section~\ref{section:SystemModel}, we formally describe the proposed analytical method for obtaining the exact per-source distribution of the AoI process for all three studied policies. In Section~\ref{section:NumericalExamples}, the proposed method is verified with simulations and a comparative analysis of the policies is provided under several scenarios. Finally, we conclude in Section~\ref{section:Conclusions}.

\section{Related Work}
\label{section:RelatedWord}
There have been quite a few studies on queueing-theoretic AoI analysis for multi-source setups when the updates have random service times. The first study on multiple sources sharing a single queue appeared in \cite{ref_66} where the authors derived the average AoI for an M/M/1 FCFS queue. This work is extended in \cite{ref_22} in which the authors studied an M/M/1 queue FCFS service as well as LCFS queues under preemptive and non-preemptive with replacement policies using the SHS (Stochastic Hybrid Systems) technique. A non-preemptive M/M/1/$m$ queue is revisited in \cite{ref_71} where the authors obtained the average AoI expressions. The authors of \cite{ref_69}  independently derived the average AoI for the M/M/1 FCFS model studied in \cite{ref_66} and also provided approximate expressions for a multi-source M/G/1 queue.
In\cite{ref_67}, the peak AoI was studied for multi-source M/G/1 and M/G/1/1 systems with heterogeneous service times. The authors in \cite{ref_68} derived closed form expressions for the average AoI and peak AoI in a multi-source M/G/1/1 queue by extending the single source age analysis in \cite{ref_70}. The authors of \cite{moltafet_tcom} considered three source-aware packet management policies in a two-source system for which they obtained the per-source average AoI for each policy using SHS. The reference \cite{sennur_green} investigated a multi-source status updating system for which the multiple threshold-based scheduling policies along with the closed form expressions for the AoI have been derived. In another line of studies \cite{ref_72},\cite{ref_73},\cite{ref_74}, the AoI analysis of multiple sources with different priorities has been considered under several packet management policies. For distributional properties, the authors in \cite{dhillon_closedFormMGF} studied non-preemptive and preemptive policies for which the moment generating function (MGF) of AoI is obtained using SHS framework. In \cite{moltafet_intensive}, the authors considered the preemptive and blocking policies in a bufferless two-source system deriving the per-source MGF of AoI. The authors of \cite{dogan_akar_tcom21} studied the distributions of both the AoI and peak AoI in a preemptive bufferless multi-source M/PH/1/1 queue allowing arbitrary and probabilistic preemptions among sources.

The most relevant existing studies to this paper are the ones that study the analytical modeling of SBPSQ systems. The benefits of SBPSQ are shown in \cite{pappas_SBPSQ} in terms of lesser transmissions and reduced per-source AoI. In \cite{PacketMan}, the authors obtained the average AoI expressions using SHS techniques for a two-source M/M/1/2 queueing system in which a packet in the queue is replaced only by a newly arriving packet of same source. In \cite{moltafet_SBPSQ}, the authors derived the per-source MGF of AoI in a two-source system for the non-preemptive and self-preemptive policies, the latter being a per-source queueing technique.

\section{Markov Fluid Queues}
\label{section:MarkovFluidQueues}
Markov fluid queues (MFQ) are described by a joint Markovian process $\boldsymbol{X}(t) = (X(t), Z(t))$ with $t \geq 0$ where $X(t)$ represents the fluid level of the process and $Z(t)$ is the modulating Continuous Time Markov Chain (CTMC) with state-space $ \boldsymbol{S} = \{1,2,\dots,K\}$ that determines the rate of fluid change (drift) of the process $X(t)$ at time $t$. The infinitesimal generator of $Z(t)$ is defined as $\boldsymbol{Q} \ (\boldsymbol{\tilde{Q}})$ for $X(t)>0$ ($X(t)=0$) and the drift matrix $\boldsymbol{R}$ is a diagonal matrix with size $K$ which is given as $\boldsymbol{R} = \mathbf{diag}\{r_1,r_2,\dots,r_K\}$ where $r_i$ is the drift value corresponding to the state $i \in \boldsymbol{S}$. When $X(t) = 0$ and $Z(t)=i$ with $r_i < 0$, the fluid level can not be depleted any further, i.e., $X(t)$ sticks to the boundary at zero. The two infinitesimal generators and the drift matrix completely characterize the MFQ, i.e., $\boldsymbol{X}(t) \sim\ MFQ(\boldsymbol{Q}, \boldsymbol{\tilde{Q}},\boldsymbol{R})$, where the size of these matrices, $K$, is the order of the MFQ. In most existing studies, the condition $\boldsymbol{Q} = \boldsymbol{\tilde{Q}}$ is satisfied for which the stationary solutions are obtained in \cite{AnickFluidMF} using the eigendecomposition of a certain matrix. The MFQ models with $\boldsymbol{Q} \neq \boldsymbol{\tilde{Q}}$ turn out to be a special case of multi-regime MFQs whose steady-state solutions can be found using numerically stable and efficient numerical methods as studied in \cite{kankaya_MRMFQ_full}. 

In this paper, we assume that $r_i \neq 0$ for $0 \leq i \leq K$ and there is a single state with unit negative drift which suffices for the AoI models developed in this paper. We consider the case when there are $L = K -1$ states with unit positive drift and $r_i = 1$ ($r_i =-1 $) for $i < K$ ($i = K$) where we particularly defined the state $K$ as the single state with negative drift without loss of generality. Hence, the characterizing matrices of $\boldsymbol{X}(t)$ are written as follows:
\begin{align}
\boldsymbol{Q}=
\left[
\begin{array}{c:c}
\boldsymbol{W} & \boldsymbol{h} \\
\hdashline
\boldsymbol{0} & 0
\end{array}
\right] , 
\boldsymbol{\tilde{Q}}=
\left[
\begin{array}{c:c}
\boldsymbol{0} & \boldsymbol{0} \\
\hdashline
\boldsymbol{\alpha} & -\boldsymbol{\alpha} \boldsymbol{1}
\end{array}
\right] ,
\boldsymbol{R}=
\left[
\begin{array}{c:c}
\boldsymbol{I} & \boldsymbol{0} \\
\hdashline
\boldsymbol{0} & -1
\end{array}
\right] ,
\label{equation:characterizingMatrices}
\end{align}
where the sizes of the north-west, north-east, and south-west partitions are $L \times L$, $L \times 1$ and $1 \times L$, respectively, and the notations $\boldsymbol{I}$, $\boldsymbol{1}$ and $\boldsymbol{0}$ are used to denote an identity matrix, column matrix of ones, and a matrix of zeros of appropriate sizes, respectively. 
We are interested in finding the steady-state joint probability density function (pdf) vector $\boldsymbol{f_L}(x)$ defined as:
\begin{align}
f_i(x) & =  \lim\limits_{t\to \infty } \dv{x}  \Pr\{X(t)\leq x,\: Z(t)=i\},   \\
\boldsymbol{f_L}(x) & =  \begin{bmatrix} f_1(x) & f_2(x) & \cdots & f_{K-1}(x) \end{bmatrix}, \label{density}
\end{align}
that is the joint pdf vector containing the states with positive drift. The following theorem provides an expression for the steady-state joint pdf vector $\boldsymbol{f_L}(x)$.
\begin{theorem}
	Consider the process $\boldsymbol{X}(t) \sim\ MFQ(\boldsymbol{Q}, \boldsymbol{\tilde{Q}},\boldsymbol{R})$ with the characterizing matrices as defined in \eqref{equation:characterizingMatrices}. Then, the steady-state joint pdf vector $\boldsymbol{f_L}(x)$ is given in matrix exponential form up to a constant as follows:
	\begin{equation}
	\boldsymbol{f_L}(x) = \eta \boldsymbol{\alpha} \mathrm{e}^{\boldsymbol{W} x} ,\\
	\label{eq:closedForm}
	\end{equation}
	where $\eta$ is a scalar constant. 
\end{theorem}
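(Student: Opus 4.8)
The plan is to reduce the problem to a linear constant-coefficient ODE coming from the stationary balance equations of the MFQ, solve it explicitly, and then fix the value at $x=0^+$ using the special structure of $\boldsymbol{\tilde{Q}}$. Writing $\boldsymbol{f}(x)=\begin{bmatrix}\boldsymbol{f_L}(x) & f_K(x)\end{bmatrix}$ for the full joint stationary density row vector, with the last entry corresponding to the unique negative-drift state $K$, the stationary distribution of the MFQ satisfies on the interior $x>0$ the Kolmogorov forward (flux-balance) equation
\[
\frac{d}{dx}\,\boldsymbol{f}(x)\,\boldsymbol{R}=\boldsymbol{f}(x)\,\boldsymbol{Q}.
\]
Inserting the block forms of $\boldsymbol{Q}$ and $\boldsymbol{R}$ from \eqref{equation:characterizingMatrices} and reading off the first $L$ coordinates gives the decoupled system $\boldsymbol{f_L}'(x)=\boldsymbol{f_L}(x)\,\boldsymbol{W}$, whence $\boldsymbol{f_L}(x)=\boldsymbol{f_L}(0^+)\,\mathrm{e}^{\boldsymbol{W}x}$; the last coordinate only yields $-f_K'(x)=\boldsymbol{f_L}(x)\,\boldsymbol{h}$, which merely expresses $f_K$ through $\boldsymbol{f_L}$ and is not needed afterwards. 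Since $\boldsymbol{Q}$ is a generator we have $\boldsymbol{W}\boldsymbol{1}=-\boldsymbol{h}\le\boldsymbol{0}$ with $\boldsymbol{h}\neq\boldsymbol{0}$, so $\boldsymbol{W}$ is a nonsingular sub-generator and therefore Hurwitz, which guarantees $\mathrm{e}^{\boldsymbol{W}x}\to\boldsymbol{0}$ and hence integrability of the candidate density.

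It then remains to identify $\boldsymbol{f_L}(0^+)$ up to a scalar. Because $r_i=1>0$ for every $i<K$, the process spends zero time at level $0$ in any positive-drift state, so the stationary law has no atom there; the only boundary mass is $\eta:=\Pr\{X=0,\,Z=K\}$. The boundary condition of the MFQ, obtained by matching the atomic part against $\boldsymbol{\tilde{Q}}$, reads $\boldsymbol{f}(0^+)\,\boldsymbol{R}=\begin{bmatrix}\boldsymbol{0}&\eta\end{bmatrix}\boldsymbol{\tilde{Q}}$; its first $L$ coordinates state that for each $i<K$ the upward probability flux $r_i\,f_i(0^+)=f_i(0^+)$ equals the rate $\eta\alpha_i$ at which the chain jumps from state $K$ at level zero into state $i$ (a level-crossing argument at $0^+$ gives the same identity). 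Hence $\boldsymbol{f_L}(0^+)=\eta\,\boldsymbol{\alpha}$, and combining with the ODE solution yields $\boldsymbol{f_L}(x)=\eta\,\boldsymbol{\alpha}\,\mathrm{e}^{\boldsymbol{W}x}$. The constant $\eta$ would be determined at the end by $\int_0^\infty\boldsymbol{f_L}(x)\boldsymbol{1}\,dx+\int_0^\infty f_K(x)\,dx+\eta=1$, but since the statement only claims the form up to a constant this normalization may be skipped.

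The step I expect to be the main obstacle is the rigorous justification of the boundary relation $\boldsymbol{f_L}(0^+)=\eta\,\boldsymbol{\alpha}$: one has to argue carefully that positive-drift states carry no probability mass at $x=0$ and establish the precise flux balance linking $\boldsymbol{f_L}(0^+)$, the atom $\eta$, and the entry vector $\boldsymbol{\alpha}$. This is exactly where the structure of $\boldsymbol{\tilde{Q}}$ enters — it is active only on the boundary and routes the single negative-drift state into the positive-drift states at rates $\boldsymbol{\alpha}$ — and it can be handled either by invoking the general boundary theory for (multi-regime) MFQs in \cite{kankaya_MRMFQ_full} or by a direct first-passage/level-crossing computation. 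The remaining steps, namely solving the constant-coefficient linear ODE and verifying that $\boldsymbol{W}$ is Hurwitz, are routine.
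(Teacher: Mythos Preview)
Your proof is correct and follows essentially the same route as the paper's: derive the interior ODE $\boldsymbol{f_L}'(x)=\boldsymbol{f_L}(x)\boldsymbol{W}$ from the forward equation $\frac{d}{dx}\boldsymbol{f}(x)=\boldsymbol{f}(x)\boldsymbol{Q}\boldsymbol{R}^{-1}$, then use the boundary relation $[\boldsymbol{f_L}(0),f_K(0^+)]=[\boldsymbol{0},\eta]\boldsymbol{\tilde{Q}}\boldsymbol{R}^{-1}$ (both lifted from \cite{kankaya_MRMFQ_full}) to identify $\boldsymbol{f_L}(0^+)=\eta\boldsymbol{\alpha}$. Your write-up is in fact more detailed than the paper's, which simply cites \cite{kankaya_MRMFQ_full} for both the differential equation and the boundary condition without the level-crossing intuition or the Hurwitz argument you supply.
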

\begin{proof}
	Let us express the steady-state joint pdf vector of $\boldsymbol{X}(t)$ as $ \boldsymbol{f}(x) = \begin{bNiceArray}{c:c} \boldsymbol{f_L}(x) & f_K(x) \end{bNiceArray}$. Based on \cite{kankaya_MRMFQ_full}, the following differential equation holds for the joint pdf vector $ \boldsymbol{f}(x)$:
	\begin{align}
	\dv{x} \boldsymbol{f}(x) & =  \nonumber
	\begin{bNiceArray}{c:c}
	\boldsymbol{f_L}(x) & f_K(x) \\
	\end{bNiceArray}
	\boldsymbol{Q} \boldsymbol{R}^{-1}, \\ 
	& = \begin{bNiceArray}{c:c}
	\boldsymbol{f_L}(x) & f_K(x) \\
	\end{bNiceArray}
	\left[
	\begin{array}{c:c}
	\boldsymbol{W} & \boldsymbol{-h} \\
	\hdashline
	0 & 0
	\end{array}
	\right],
	\label{equation:diffEquation}
	\end{align}
	along with the following boundary condition also given in \cite{kankaya_MRMFQ_full}:	
	\begin{align}
	\begin{bNiceArray}{c:c}
	\boldsymbol{f_L}(0) & f_K(0^+) \\
	\end{bNiceArray}
	& = 
	\begin{bNiceArray}{c:c}
	\boldsymbol{0} & \eta
	\end{bNiceArray}
	\boldsymbol{\tilde{Q}} \boldsymbol{R}^{-1} , \\
	& =
	\begin{bNiceArray}{c:c}
	\eta \boldsymbol{\alpha} & \eta \boldsymbol{\alpha} \boldsymbol{1}
	\end{bNiceArray}
	,
	\label{equation:boundaryConditions}
	\end{align}
	where $\eta =  \lim_{t\to \infty } \Pr\{X(t)=0,\; Z(t)=K\}$ is the steady-state probability mass accumulation at 0 when $Z(t) = K$. The solution of interest to \eqref{equation:diffEquation} can be written as $\boldsymbol{f_L}(x) = \boldsymbol{f_L}(0) \mathrm{e}^{\boldsymbol{W} x}$ where  $\boldsymbol{f_L}(0)=\eta \boldsymbol{\alpha}$ from \eqref{equation:boundaryConditions}, which completes the proof.
\end{proof}
\begin{remark}
	In \cite{dogan_akar_tcom21}, the scalar constant $\eta$ was also explicitly obtained for similar MFQs with a more elaborate algorithm. 
	However, we have recently observed that obtaining the quantity $\boldsymbol{f_L}(x)$ up to a scalar constant is sufficient for finding the AoI distributions of interest.
\end{remark}

\section{Analytical Models}
\label{section:SystemModel}
We consider the information update system shown in Fig.~\ref{figure:system} consisting $N$ sources with independent arrival and service rates. Packet arrivals are assumed to be Poisson process with traffic intensity vector $(\lambda_1, \lambda_2, \dots, \lambda_N)$ and service times are exponentially distributed with rate vector $(\mu_1, \mu_2, \dots, \mu_N)$ where the per-source load is defined as $\rho_n = \lambda_n / \mu_n$ and the overall system load is given by $\rho = \sum_{n=1}^{N} \rho_n$. The packet management policy is as follows: A newly arriving packet immediately receives service if the server is found idle. Otherwise, the packet gets queued in the 1-packet buffer allocated to that particular source. If the buffer is not empty, the existing packet is replaced only if the arriving packet belongs to same source. Upon a service completion, if there exists only one packet in the waiting room, this packet immediately starts to receive service. On the other hand, a specific policy is applied to select a source to be served if there exist multiple packets in the waiting room. In this setting, which we refer as Single Buffer Per-Source Queueing (SBPSQ), we first study two policies, namely the First Source First Serve (FSFS) and the Earliest Served First Serve (ESFS), for which we construct a unifying MFQ model to obtain the exact AoI distributions for each source. Subsequently, this framework is employed to study the SBR policy.

\subsection{First Source First Serve (FSFS) Policy}
In the FSFS policy, the source of the first packet arrived to the system is the first source to be served. In other words, the service order of sources with an existing packet in the queue is solely determined by their first packet arrival times and thus the service order does not change under replacement events. In our modeling approach, we focus on a source, say source-1, for which we obtain the exact distribution of AoI where the distribution for any source can be obtained similarly by renumbering the sources.

As the first step, we will obtain the probability distribution of the possible system states that an arriving source-1 packet finds upon its arrival to the system (which will be subsequently used while constructing the proposed MFQ model in the second step). For this purpose, we construct a finite state-space Continuous-time Markov chain (CTMC), denoted by $Y(t)$. We enumerate each state for $Y(t)$ as a tuple $q = (i,(P_m)) \in \mathcal{Q}_Y$ where  $ i \in \mathcal{I}_Y = \{0,1,\dots,N\}$ enumerates the source tag of packet that is currently being served except the case when $i=0$ which is used for enumerating the idle server. Let $P_m = s_1,s_2, \dots ,s_m$, $1 \leq m \leq N$, enumerates an $m$-permutation of set $\mathcal{N} = \{1,2,\dots,N\}$ such that any $P_m \in \Gamma_Y$ can be generated by choosing $m$ distinct source tags $s_j$, $1 \leq j \leq m$, from set $\mathcal{N}$ and ordering them. When the server is busy and the queue contains $m$ packets, we define the queue status $(P_m)$ as follows: 
\begin{equation}
(P_m)=
\begin{cases}
(0), & m = 0, \\
(s_1,s_2, \dots ,s_m), & 1 \leq m \leq N, \\
\end{cases}
\label{eq:MTP_boundary}
\end{equation}  
where the term $(s_1,s_2, \dots ,s_m)$ enumerates the ordering of $m \geq 1$ sources in the queue with respect to their first packet arrival times in ascending order. When there are $m \geq 1$ packets in the queue and a packet belonging to source-$s_j, 1 \leq j \leq m$, arrives to the system, the replacement event occurs but the queue status $(P_m)$ does not get updated. According to the FSFS policy, the packet of leftmost source will receive service first, i.e., $s_1$ denotes the source tag of packet which will receive service first among those in the queue. Similarly, $s_2$ is the source tag of packet which will receive service after the service completion of source-$s_1$ and so on. Since, the $s_j$ terms $\forall j \in \{1,2,\dots,m\}$ in $P_m$ are all distinct, we also denote the set of sources with an existing packet in the queue as $\{P_m\}$ without any ambiguity. Lastly, when the server is idle, we enumerate the system state as $q = (0,(0))$ since there cannot be any waiting packet in the queue when the server is idle. 

Suppose that the system state at time $t$ is $Y(t) = q$ at which moment a service completion event occurs when there are $m > 0$ packets in the queue. According to the FSFS policy, the server selects the packet of source-$s_1$ for service after which the system transitions into the state $q' = (s_1,(P'_m))$ where the updated queue status $(P'_m)$ with $m-1$ packets in the queue is given as:
\begin{equation}
(P'_m)=
\begin{cases}
(0), & m = 1, \\
(s_2,s_3, \dots ,s_m), & 1 < m \leq N, \\
\end{cases}
\label{eq:MTP_boundary}
\end{equation}  
that is the source-$s_1$ is removed from the ordered list of sources with an existing packet in the queue.

Let $\nu_{q,q'} > 0$, $q,q' \in \mathcal{Q}_Y$, denotes the rate of transition from state $q = (i,(P_m))$ to state $q' \neq q$ where we list all such transitions for the FSFS policy in Table~\ref{Table:TransitionRates_FSFS} for which the rows 1-3 (4-6) correspond to the rates for arrival (departure) events. For any other state pair $q$, $q' \in \mathcal{Q}_Y$, the transition rate $\nu_{q,q'}$ is zero.

\begin{table}[h]
	\centering
	\caption{Transition rates $\nu_{q,q'}$ of $Y(t)$ for the FSFS policy}
		\begin{tabular}{|c|c|c|c|}
			\hline
			$q$ 		   & $q'$                 						& $\nu_{q,q'}$  & Condition 		\\ \hline
			$(0,(0))$        & $(i,(0))$                          			& $\lambda_i$	  & $i \in \mathcal{N}$			 	\\ \hline
			$(i,(0))$        & $(i,(j))$                          			& $\lambda_j$	  & $i, j \in \mathcal{N}$			 	\\ \hline
			$(i,(P_m))$        & $(i,(P_m,j))$                          			& $\lambda_j$	  & $i,j \in \mathcal{N}, $ 
			\\ & & & $j\notin \{P_m\}, P_m \in \Gamma_Y$			 	\\ \hline 
			$(i,(0))$        & $(0,(0))$                          			& $\mu_i$	  & $i \in \mathcal{N}$			 	\\ \hline
			$((i,(P_m))$        & $(s_1,(P'_m))$                          			& $\mu_i$	  & $i \in  \mathcal{N}, P_m \in \Gamma_Y$			 	\\ \hline
		\end{tabular}
	\label{Table:TransitionRates_FSFS}
\end{table}

Let us denote the probability that the system is in state $q$ as time goes to infinity, i.e., $ \pi_q = \lim_{t \to \infty} P(Y(t) = q)$. Following the ergodicity of Markov chain $Y(t)$, the steady-state distribution converges to a unique vector consisting of elements $\pi_q, q \in \mathcal{Q}_Y$, which satisfies the following set of linear equations:
\begin{align}
& \pi_q \sum_{q' \in \mathcal{Q}_Y} \nu_{q,q'} = \sum\limits_{q' \in \mathcal{Q}_Y} \pi_{q'} \nu_{q',q}, \forall q \in \mathcal{Q}_Y, \label{Yt_equation_1} \\
& \sum_{q \in \mathcal{Q}_Y} \pi_q = 1. \label{Yt_equation_2}
\end{align}
Since the packet arrivals are Poisson, the probability that an arriving packet finds the system in state $q \in \mathcal{Q}_Y$ is $ \pi_q$ as a direct consequence of the \emph{PASTA} (Poisson Arrivals See Time Averages) property \cite{gross_harris_book}.

%

In the second step, we construct the proposed MFQ process $\boldsymbol{X}(t) = (X(t), Z(t))$ which describes a fluid level trajectory with infinitely many independent cycles as shown in Fig.~\ref{figure:samplePath} where each cycle begins with an arriving source-1 packet to the system and ends with either the reception of the next source-1 packet at the destination (cycle 3 and 5 in Fig.~\ref{figure:samplePath}) or the possible packet replacement by another source-1 arrival (cycle 1, 2 and 4 in Fig.~\ref{figure:samplePath}). First, we construct the state-space $\boldsymbol{S}$ of sub-process $Z(t)$ by dividing a cycle into four phases and defining the set of states for each phase. For state enumerations, we define three additional tags for packets belonging to source-1 to differentiate between them in different states and phases: 
\begin{itemize}
	\item The packet $1_c$, i.e., current source-1 packet, refers to the source-1 packet that initiates each cycle with its arrival to the system.
	\item When a packet $1_c$ arrives to the system, the server can be busy already serving another source-1 packet which is enumerated as $1_p$, i.e., previous source-1 packet.
	\item The packet $1_n$ (next source-1 packet) enumerates the received source-1 packet subsequent to the packet $1_c$ at destination.
\end{itemize}
\begin{figure*}[tbh]
	\centering{
		\includegraphics[width=1\linewidth]{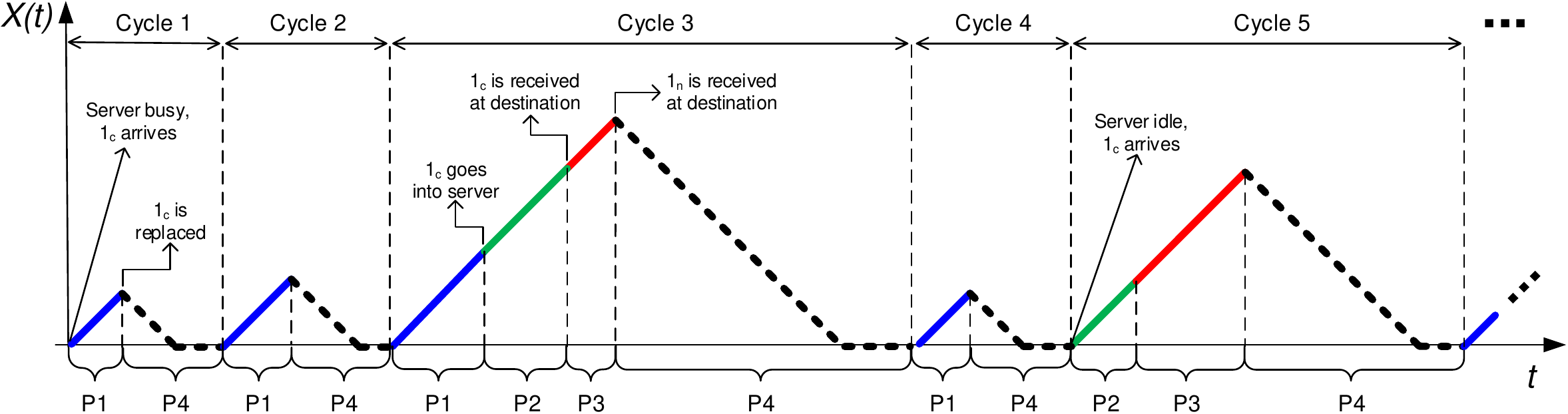}
	}
	\caption{Sample path of the fluid level process $X(t)$ with infinitely many independent cycles.}
	\label{figure:samplePath}
\end{figure*}
Each cycle consists of four phases, namely Phase 1-4 as shown in Fig.~\ref{figure:samplePath}. If the server is busy when the packet $1_c$ arrives, the cycle starts from Phase 1 (solid blue curve) during which the fluid level $X(t)$ increases at a unit rate and terminates with either the beginning of its service time at which moment the system transitions into Phase 2 (which occurs during cycle 3 in Fig.~\ref{figure:samplePath}) or the replacement of it by another source-1 arrival (which occurs during cycle 1, 2 and 4 in Fig.~\ref{figure:samplePath}). In the latter case, the queue wait time of the packet $1_c$ needs a reset which is accomplished by transitioning directly into a final phase, defined as Phase 4 (dashed black curve), that is used for reseting the fluid level by bringing it down to zero with a unit rate. For Phase 1, we enumerate each state as $q = (i,(P_m)) \in \mathcal{Q}_1$ where  $ i \in \mathcal{I}_1 = \{1_p,2,\dots,N\}$ denotes the source tag of the packet under service. For any $i \in \mathcal{I}_1$ value, the element $(P_m)$, $1 \leq m \leq N$, enumerates the ordering of packets in the queue similar to previously given definition for $Y(t)$ with the exception that the packet $1_c$ always exists in the queue during Phase 1. Thus, any $P_m \in \Gamma_1$ can be generated by ordering $1_c$ and another $(m-1)$ distinct source tags selected from the set $\{2,3,\dots,N\}$. With all these definitions, we enumerate the queue status $(P_m)$ containing $m$ packets for Phase 1 as follows:
\begin{equation}
(P_m)= (s_1,s_2, \dots ,s_m), \; 1_c \in \{P_m\}, \; 1 \leq m \leq N, \\
\label{eq:MTP_boundary}
\end{equation}
which is valid for any $i \in \mathcal{I}_1$ server status.

In addition to possible transition from Phase 1, if an arriving packet $1_c$ finds the system idle, a direct transition to Phase 2 (solid green curve) occurs (which is shown as cycle 5 in Fig.~\ref{figure:samplePath}). During Phase 2, the fluid level continues to increase at a unit rate until the reception instant of packet $1_c$ at destination at which moment the system transitions into Phase 3. Note that once the packet $1_c$ goes into server, it can no longer be replaced by another packet arrival. Thus, the only possible transition out of Phase 2 is into Phase 3. We enumerate each state for Phase 2 as $q = (1_c,(P_m)) \in \mathcal{Q}_2$ where the queue status $(P_m)$, $1 \leq m \leq N$, is similar to previously given definitions with the exception that the packet $1_n$ may exist or not in the queue during Phase 2. In the latter case, any $P_m \in \Gamma_2$, $1 \leq m \leq N-1$, can be generated by ordering $m$ distinct source tags selected from the set $\{2,3,\dots,N\}$. In the former case, we impose a restriction given as $s_m = 1_n$, i.e., the last packet to be served in the queue is always $1_n$. The reason is that any packet behind $1_n$ in the queue is irrelevant because, as shown in Fig.~\ref{figure:samplePath}, the system always transitions into the final phase, i.e., Phase 4, at the reception instant of packet $1_n$ at destination regardless of the queue status. Therefore, for this case, any $P_m \in \Gamma_2$, $1 \leq m \leq N$, can be generated by selecting $1_n$ and another $(m-1)$ distinct source tags from set $\{2,3,\dots,N\}$, and ordering them while satisfying the condition $s_m = 1_n$, i.e., the last source to be served is the source-1 when there are $m \geq 1$ packets in the queue. Finally, when there is no packet in the queue, we define the queue status as $(P_m) = (0)$. With all these definitions, we enumerate the queue status $(P_m)$ containing $m$ packets for Phase 2 as follows:
\begin{equation}
(P_m)=
\begin{cases}
(0), & m = 0, \\
(s_1,s_2, \dots ,s_m), & 1 \leq m \leq N-1, 1_n \notin \{P_m\}, \\
(s_1,s_2, \dots ,s_m), & 1 \leq m \leq N, s_m = 1_n. \\
\end{cases}
\label{eq:Phase2_Q}
\end{equation}
Once Phase 2 is over, Phase 3 (solid red curve) starts and continues until the reception of the packet $1_n$, at destination. Each state for Phase 3  is enumerated as $q = (i,(P_m)) \in \mathcal{Q}_3$ where  $ i \in \mathcal{I}_3 = \{0, 1_n,2,\dots,N\}$ denotes the source tag of the packet under service except the case when $i=0$ which is used for enumerating the idle server. Similar to the arguments for Phase 2, any packet behind $1_n$ in the system is irrelevant. Therefore, when the packet under service is $1_n$, the system state is enumerated as $q = (1_n,(0))$. If the server is busy but the packet under service is not $1_n$, i.e., $i \neq 1_n$, the buffer status $(P_m)$ can be defined as given in (\ref{eq:Phase2_Q}) similar to Phase 2. In particular, if the buffer contains the packet $1_n$, any $P_m \in \Gamma_3$, $1 \leq m \leq N$, can be generated by ordering $1_n$ and another $(m-1)$ distinct elements selected from set $\{2,3,\dots,N\}$, satisfying the condition $s_m = 1_n$. If the buffer does not contain $1_n$, any $P_m \in \Gamma_3$, $1 \leq m \leq N-1$, can be generated by ordering $m$ distinct elements selected from the set $\{2,3,\dots,N\}$. Finally, when $i=0$, we enumerate the idle server status as $q = (0,(0))$ which may occur only in Phase 3 when the packet $1_c$ was delivered to the destination but the next source-1 packet, i.e., packet $1_n$, has not yet arrived to the system.


Once Phase 3 is over, the system transitions into the final stage, i.e., Phase 4, where the fluid level is brought down to zero with a drift of minus one after which the fluid level stays at zero for exponentially distributed time with unit rate. Thus, Phase 4 consists of a single state which we enumerate as $q = (-1,(-1)) \in \mathcal{Q}_4$. After the fluid level is brought down to zero in Phase 4, the only possible transition out of Phase 4 is to Phase 1 or 2 both of which initiates a new cycle that is independent from all previous cycles.  With all these definitions, the state-space $\boldsymbol{S}$ of $Z(t)$ can now be defined as $\boldsymbol{S} =  \bigcup_{p=1}^{4} \mathcal{Q}_p$ consisting of all states defined for Phase 1-4.

State transition diagram of the subprocess $Z(t)$ can be represented as a directed graph as shown in Fig.~\ref{Figure:EvolutionPhase} where each edge represents a set of transitions between or within phases. We define the corresponding transition rates such that if the system remains in the same phase after a transition, we will refer such transition as intra-phase transition for which the rate is denoted as $\alpha_{q,q'} , q,q' \in \mathcal{Q}_p, p = 1,2,3,4$, whereas if it enters to another phase, it will be referred as inter-phase transition in which case the rate is denoted as $\beta_{q,q'}, q \in \mathcal{Q}_p, q' \notin \mathcal{Q}_p, p = 1,2,3,4$. For the FSFS policy, all intra-phase and inter-phase transitions are listed in Table~\ref{Table:intraPhase_FSFS} and Table~\ref{Table:interPhase_FSFS}, respectively, where the set $\mathcal{J}_p$, $p=1,2,3$ is defined as the set of source tags to which any packet in the queue may belong in Phase $p$ that is $\mathcal{J}_1 = \{1_c,2,\dots,N\}$ and $\mathcal{J}_2 = \mathcal{J}_3 = \{1_n,2,\dots,N\}$. Unless explicitly stated in the corresponding row, given transitions are defined for the condition $X(t) > 0$, which constitute the entries of matrix $\boldsymbol{Q}$, whereas the transitions defined for $X(t) = 0$ constitute the entries of the matrix $\boldsymbol{\tilde{Q}}$. For intra-phase transitions, the rows 1-2, 3-4 and 5-9 refer to the transitions for Phase 1, 2 and 3, respectively. Note that there is no intra-phase transition for Phase 4 since its state-space consists of a single state. For inter-phase transitions, the rows 1, 2, 3-4, 5, 6 and 7-9 refer to the transitions from Phase 1 to 2, Phase 1 to 4, Phase 2 to 3, Phase 3 to 4, Phase 4 to 2 and Phase 4 to 1, respectively. Since the transitions from Phase 4 to 1 or 2 initiate a new cycle, their rates are proportional to the steady-state distribution of the system status that a source-1 packet finds upon its arrival to the system. By solving the steady-state distribution of the process $Y(t)$ as described in the first step, the rates of these transitions are expressed as given in the last three rows of Table~\ref{Table:interPhase_FSFS}. Expressing the transition rates in terms of the steady-state probabilities of $Y(t)$ stems from the fact that the fluid level stays at zero in Phase 4 for exponentially distributed time with unit rate, i.e., the sum of transitions out of Phase 4 when $X(t) = 0$ should be exactly one which equals to the sum of steady-state probabilities $\pi_q, q \in \mathcal{Q}_Y$.
\begin{figure}
	\centering
	\begin{tikzpicture}[->, >=stealth', auto, semithick, node distance=3.8cm]
	\tikzstyle{every state}=[fill=white,draw=black,thick,text=black,scale=1]
	\node[state]    (n1)                     {\small $q \in \mathcal{Q}_1$};
	\node[state]    (n2)[right of=n1]   		{\small $q \in \mathcal{Q}_2$};
	\node[state]    (n4)[below of=n1]   		{\small $q \in \mathcal{Q}_4$};
	\node[state]    (n3)[below of=n2]   		{\small $q \in \mathcal{Q}_3$};
	\path
	(n1) edge[loop left]		node{}	(n1)
	edge     				node[text width=1.7cm,align=center]{\small $1_c$ goes into server}     (n2)
	edge    node[text width=1.2cm,align=left]{\small $1_c$ is replaced}     (n4)
	(n2) edge[loop right]		node{}	(n2)
	edge     				node[text width=2cm,align=center]{\small $1_c$ is received at destination}     (n3)
	(n3) edge[loop right]		node{}	(n3)
	edge     				node[text width=2.2cm,align=center]{\small $1_n$ is received at destination}     (n4)
	(n4) edge[bend left]    node[text width=1.2cm,align=center]{\small Server busy, $1_c$ arrives}     (n1)
	edge     node[text width=1.5cm,align=center, below,sloped]{\small Server idle, $1_c$ arrives}     (n2);
	
	\end{tikzpicture}
	\caption{State transition diagram of the subprocess $Z(t)$}
	\label{Figure:EvolutionPhase}
	
\end{figure}
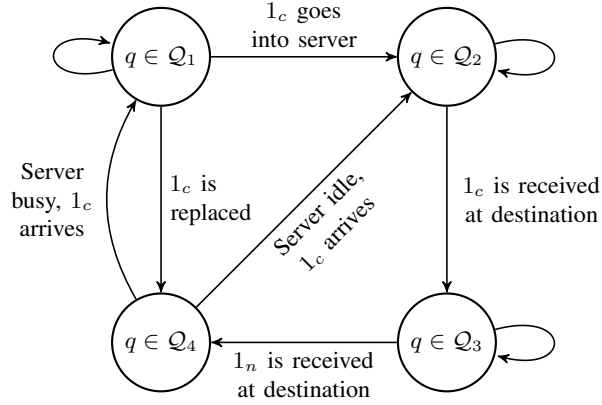  
\begin{table}[!htb]
	\centering
	\caption{Intra-phase transition rates $\alpha_{q,q'}$ for the FSFS policy}
		\begin{tabular}{|c|c|c|c|}
			\hline	
			$q$ 		   & $q'$                 						& $\alpha_{q,q'}$  & Condition 		\\ \hline
			
			$(i,(P_m))$        & $(i,(P_m,j))$                          			& $\lambda_j$	  & $i \in \mathcal{I}_1, j \in \mathcal{J}_1$
			\\ & & & $j\notin \{P_m\}, P_m \in \Gamma_1$	\\ \hline		 	
			
			$(i,(P_m))$              & $(s_1,(P'_m))$                          		& $\mu_i$				  & $i \in \mathcal{I}_1,s_1 \neq 1_c$		 	
			\\ & & & $P_m \in \Gamma_1$ \\ \hline
			
			$(1_c,(0))$        & $(1_c,(j))$                          			& $\lambda_j$	  & $j \in \mathcal{J}_2$ \\ \hline
			
			$(1_c,(P_m))$        & $(1_c,(P_m,j))$                          			& $\lambda_j$	  & $j \in \mathcal{J}_2, j\notin \{P_m\},$
			\\ & & & $ s_m \neq 1_n, P_m \in \Gamma_2$	\\ \hline		 	 
			
			$(0,(0))$        & $(i,(0))$                          			& $\lambda_i$	  & $i \in \mathcal{I}_3-\{0\}$			 	\\ \hline
			
			$(i,(0))$        & $(i,(j))$                          			& $\lambda_j$	  & $i \in \mathcal{I}_3-\{0, 1_n\}, j \in \mathcal{J}_3$	\\ \hline		 	
			
			$(i,(P_m))$        & $(i,(P_m,j))$                          			& $\lambda_j$	  & $i \in \mathcal{I}_3-\{0, 1_n\}, j \in \mathcal{J}_3,$
			\\ & & & $s_m \neq 1_n, j\notin \{P_m\}, P_m \in \Gamma_3$ \\ \hline		 	
			
			$(i,(0))$        & $(0,(0))$                          			& $\mu_i$	  & $i \in \mathcal{I}_3-\{0, 1_n\}$		 	\\ \hline
			
			$(i,(P_m))$        & $(s_1,(P'_m))$                          			& $\mu_i$	  & $i \in \mathcal{I}_3-\{0, 1_n\}, P_m \in \Gamma_3$ \\ \hline			 	
			
			%

		\end{tabular}
	\label{Table:intraPhase_FSFS}
\end{table}
\begin{table}[!htb]
	\centering
	\caption{Inter-phase transition rates $\beta_{q,q'}$ for the FSFS policy}
	\begin{tabular}{|c|c|c|c|}
		\hline	
		$q$ 		   & $q'$                 						& $\beta_{q,q'}$  & Condition 		\\ \hline
		
		$(i,(P_m))$              & $(1_c,(P'_m))$                          		& $\mu_i$				  & $i \in \mathcal{I}_1, s_1=1_c,$		 			
		\\ & & & $P_m \in \Gamma_1$ \\ \hline
		
		
		$(i,(P_m))$              & $(-1,(-1))$                          		& $\lambda_1$				  & $i \in \mathcal{I}_1, P_m \in \Gamma_1$ \\ \hline
		
		$(1_c,(0))$              & $(0,(0))$                          		& $\mu_1$				  & \\ \hline
		
		$(1_c,(P_m))$        & $(s_1,(P'_m))$                         			& $\mu_1$	  & $P_m \in \Gamma_2$ \\ \hline
		
		
		$(1_n,(0))$              & $(-1,(-1))$                          		& $\mu_1$				  & \\ \hline
		
		$(-1,(-1))$              & $(1_c,(0))$                          		& $\pi_{(0,(0))}$				  & $X(t) = 0$\\ \hline
		
		$(-1,(-1))$              & $(i,(P_m))$                          		& $\pi_{(i,(P_m))}$				  & $X(t) = 0,$
		\\ & & & $i \in \mathcal{I}_1, s_m \neq 1_c,$ 
		\\ & & & $1_c \in \{P_m\}, P_m \in \Gamma_1$ \\ \hline
		
		$(-1,(-1))$              & $(i,(1_c))$                          		& $\pi_{(i,(0))} + $				  & $X(t) = 0, i \in \mathcal{I}_1$
		\\ & & $\pi_{(i,(1))}$ & \\ \hline
		
		$(-1,(-1))$              & $(i,(P_m,1_c))$                          		& $\pi_{(i,(P_m))} + $				  & $X(t) = 0, i \in \mathcal{I}_1,$
		\\ & & $\pi_{(i,(P_m,1))}$ & $1_c \notin \{P_m\}, P_m \in \Gamma_1$ \\ \hline	
	\end{tabular}
	\label{Table:interPhase_FSFS}
\end{table}

Next, we define the drift value $r_q$ for each state $\forall q \in \boldsymbol{S}$ which constitutes the diagonal entries of the drift matrix $\boldsymbol{R}$. Since the fluid level increases at a unit rate in each state for Phase 1-3, we have $r_q = 1, \forall q \in \bigcup_{p=1}^{3} \mathcal{Q}_p$ whereas the fluid level is brought down to zero with a drift of minus one in Phase 4. Thus, we have $r_q = -1, \forall q \in \mathcal{Q}_4$ which completes the construction of the proposed MFQ model $\boldsymbol{X}(t) \sim\ MFQ(\boldsymbol{Q}, \boldsymbol{\tilde{Q}},\boldsymbol{R})$. From state definitions, the MFQ $\boldsymbol{X}(t)$ has a single state with negative drift and its characterizing matrices can be written as in \eqref{equation:characterizingMatrices} by ensuring that the state with negative drift, i.e., $(-1,(-1)) \in \mathcal{Q}_4$, is numbered as the last state that is the state $K$ in the formulation given in Section~\ref{section:MarkovFluidQueues}. 


By sample path arguments, we observe that one sample cycle of the AoI process coincides with the part of sample cycle of $X(t)$ associated with Phase 3 only as indicated by the red parts of the curve in Fig.~\ref{figure:samplePath}. Therefore, the probability density function (pdf) of the AoI for source-1, denoted by $f_{\Delta_1}(x)$, can be written as follows:
\begin{align}
f_{\Delta_1}(x) = \frac{\sum\limits_{q \in \mathcal{Q}_3} f_q(x)}{\int\limits_{0}^{\infty} \sum\limits_{q \in \mathcal{Q}_3} f_q(x')\,dx'}, x \geq 0.
\end{align}
For censoring out all states with positive drift other than the ones in $\mathcal{Q}_3$, we define a column vector $\boldsymbol{\beta}$ of size $L$ containing only zeros except for the states $q \in \mathcal{Q}_3$ for which it is set to one. Using \eqref{eq:closedForm} along with this definition, we can finally obtain:
\begin{align}
f_{\Delta_1}(x) = \epsilon \: \boldsymbol{\alpha} \mathrm{e}^{\boldsymbol{W} x} \boldsymbol{\beta}, \: x \geq 0,
\label{equation:finalPDF}
\end{align}
where $\epsilon^{-1} = - \boldsymbol{\alpha} \boldsymbol{W^{-1}}\boldsymbol{\beta}$. The $k$th non-central moments of $\Delta_1$ can also be easily written as follows:
\begin{align}
E\Big[(\Delta_1)^k \Big] = (-1)^{k+1} k! \; \epsilon \: \boldsymbol{\alpha} \boldsymbol{W^{-(k+1)}} \boldsymbol{\beta}.
\label{eq:moments}
\end{align}
Similar steps are then followed for obtaining the pdf of the AoI for source-$n$, denoted by $f_{\Delta_n}(x), n = 2, 3, \dots, N$, by renumbering the sources. Finally, we define the performance metrics of interest, namely the average AoI and the average age violation probability, denoted by $E[\Delta]$ and $\Theta(\gamma)$, respectively, as follows:
\begin{equation}
E[\Delta] = \sum\limits_{n=1}^{N} \frac{E[\Delta_n]}{N}, \ \Theta(\gamma) = \sum\limits_{n=1}^{N} \frac{Q_{\Delta_n}(\gamma)}{N},
\end{equation}
where $\Delta= \frac{1}{N} \sum_{n=1}^{N} \Delta_n $, $E[\Delta_n]$ is the average AoI for source-$n$, and $Q_{\Delta_n}(\gamma)$ is the age violation probability for source-$n$ which is calculated as $Q_{\Delta_n}(\gamma) = Pr\{\Delta_n > \gamma\}$ where $\gamma$ is a given age violation threshold.

The framework that we introduced in this subsection is unifying in the sense that it can be generalized to any SBPSQ policy by only redefining the following terms:
\begin{itemize}
	\item The state-space $\mathcal{Q}_Y$ and the corresponding transition rates $\nu_{q,q'}$ of the process $Y(t)$,
	\item The state-space $\mathcal{Q}_p$ for $p=1,2,3,4$, and the corresponding intra (inter) phase transition rates $\alpha_{q,q'}$ ($\beta_{q,q'}$),
\end{itemize}
since the sample path of the fluid level process shown in Fig.~\ref{figure:samplePath} and the state transition diagram shown in Fig.~\ref{Figure:EvolutionPhase} are valid for any such policy. In fact, from sample path arguments, this generalization also holds for the SBR policy. Therefore, using this unifying framework, we provide the analytical models for both ESFS and SBR policies by only redefining the above-mentioned state-spaces and transition rates.

\subsection{Earliest Served First Serve (ESFS) Policy}
Each state for $Y(t)$ is enumerated as a tuple $q = ((H),\{C_m\}) \in \mathcal{Q}_Y$ for the ESFS policy. Let $H = h_1, h_2, \dots, h_N$ enumerates any $N$-permutation of set $\mathcal{N}$ such that any $H$ can be generated by choosing $N$ distinct source tags from set $\mathcal{N}$, i.e., all source tags, and ordering them. Accordingly, the element $(H) = (h_1, h_2, \dots, h_N) \in \mathcal{H}_Y$ is defined as the service status where the sources are listed in descending order with respect to their last service time instants. In other words, the tag $h_1 (h_N)$ indicates the source that has not received service for the longest (shortest) duration. For any state except the idle server, the tag $h_N$ indicates the source whose packet is currently being served. Therefore, when a packet belonging to source-$i$ goes into server, the tag $h_N$ has to be updated as $i$ and the other terms have to be shifted accordingly. For this purpose, we define an operation $\Upsilon(H, i)$ that updates the service status when a packet belonging to source-$i$ goes into server as follows:
\begin{equation}
\Upsilon(H, i) = H'_i = (h_1,\dots,h_{f-1},h_{f+1}, \dots, h_N, h_f),
\end{equation}
where $h_f = i$, i.e., the tag $h_f$ belongs to source-$i$. Furthermore, we let $C_m = s_1,s_2, \dots ,s_m$, $1 \leq m \leq N$, enumerates an $m$-combination of set $\mathcal{N} = \{1,2,\dots,N\}$ such that any $C_m \in \Gamma_Y$ can be generated by choosing $m$ distinct source tags $s_j$, $1 \leq j \leq m$, from set $\mathcal{N}$. Accordingly, the element $\{C_m\} = \{s_1,s_2, \dots ,s_m\}$ is defined as the set of $m \geq 1$ sources with an existing packet in the queue where the ordering of $s_j$ terms is irrelevant in contrast with the FSFS policy. In the ESFS policy, the server selects the packet belonging to the source that has not received service for the longest duration among those with an existing packet in the queue. Suppose that the system state at time $t$ is $Y(t) = q$ at which moment a service completion event occurs when there are $m > 0$ packets in the queue. In line with the ESFS policy, the server selects the packet of source-$i^*$ for service where the tag $i^*$ is defined as:
\begin{equation}
i^* = h_{f^*}, \:  f^* = \min_{\forall f \in \mathcal{N}} f, h_f \in \{C_m\},
\end{equation}
after which the system transitions into the state $q' = ((H'_{i^*}),\{C'_m\})$ where the updated queue status $\{C_m'\}$ with $m-1$ packets in the queue is given as:
\begin{equation}
\{C_m'\}=
\begin{cases}
\{0\}, & m = 1, \\
\{C_m\} - \{i^*\}, & 1 < m \leq N, \\
\end{cases}
\label{eq:MTP_boundary}
\end{equation}  
that is the source-${i^*}$ is removed from the list of sources with an existing packet in the queue. Next, we define the system states with an empty buffer as follows:
\begin{itemize}
	\item When the server is busy but the queue is empty, we define the system state as $q = ((H),\{0\})$ where the packet in service belongs to source-$h_n$.
	\item When the server is idle, we define the system state as $q = ((H),\{-1\})$ since the service status has to be always preserved in the ESFS policy even if the server is idle. In this case, the source-$h_N$ is the source whose packet has been served most recently but is not currently in service.
\end{itemize}
This concludes the state definitions for the process $Y(t)$ after which we define the transition rates $\nu_{q,q'}$ of $Y(t)$ in Table~\ref{Table:TransitionRates_ESFS} where the rates correspond to the arrival (departure) events are given in the rows 1-3 (4-5).
\begin{table}[!htb]
	\centering
	\caption{Transition rates $\nu_{q,q'}$ of $Y(t)$ for the ESFS policy}
		\begin{tabular}{|c|c|c|c|}
			\hline
			$q$ 		   & $q'$                 						& $\nu_{q,q'}$  & Condition 		\\ \hline
			$((H),\{-1\})$        & $((H'_i),\{0\})$                          			& $\lambda_i$	  & $H \in \mathcal{H}_Y$			 	\\ \hline
			$((H),\{0\})$        & $((H),\{j\})$                          			& $\lambda_j$	  & $H \in \mathcal{H}_Y, j \in \mathcal{N}$			 	\\ \hline
			$((H),\{C_m\})$        & $((H),\{C_m, j\})$                          			& $\lambda_{j}$	  & $H \in \mathcal{H}_Y, j \in \mathcal{N},$
			\\ & & & $j \notin \mathcal{C}_m, \mathcal{C}_m \in \Gamma_Y$ 
			\\ \hline 
			$((H),\{0\})$        & $((H),\{-1\})$                          			& $\mu_{h_N}$	  &	$H \in \mathcal{H}_Y$		 	\\ \hline
			$((H),\{C_m\})$        & $((H'_{i^*}),\{C'_m\})$                          			& $\mu_{h_N}$	  & $H \in \mathcal{H}_Y, {C}_m \in \Gamma_Y$
			\\ \hline
		\end{tabular}
	\label{Table:TransitionRates_ESFS}
\end{table}
Next, we define the states $q  = ((H),\{C_m\}) \in \mathcal{Q}_p$ for each phase.
\begin{itemize}
	\item For Phase 1, the service status $(H) \in \mathcal{H}_1$ is defined as an $N$-permutation of set $\mathcal{I}_1 = \{1_p,2,\dots,N\}$. For any $(H) \in \mathcal{H}_1$, the packet $1_c$ always exists in the queue from the definition of Phase 1. Thus, any $C_m \in \Gamma_1$, $1\leq m \leq N$, can be generated by choosing the tag $1_c$ and $(m-1)$ distinct tags from set $\{2,\dots,N\}$.
	\item In Phase 2, the server may only serve the packet $1_c$ and the queue may contain the packet $1_n$ or not from the definition of Phase 2. Therefore, we define the service status $(H) \in \mathcal{H}_2$ as an $N$-permutation of set $\mathcal{I}_2 = \{1_c,2,\dots,N\}$ such that $h_N = 1_c$ which ensures that the packet under service belongs to the source-1. For any $(H) \in \mathcal{H}_1$, the term $C_m \in \Gamma_2$ for $1\leq m \leq N$ is defined as an $m$-combination of set $\{1_n,2,\dots,N\}$ whereas we use $\{C_m\} = \{0\}$ when the buffer is empty. 	
	\item For Phase 3, we define the service status $(H) \in \mathcal{H}_3$ as an $N$-permutation of set $\mathcal{I}_3 = \{1_n,2,\dots,N\}$. When the server is idle, we define the system state as $q = ((H),\{-1\})$ similar to the previously given definition for $Y(t)$. When the server is busy, the states are defined as follows: When the tag of packet under service is $1_n$, i.e., $h_N = 1_n$, the queue status is defined as $\{C_m\} = \{0\}$ since the packets behind $1_n$ are irrelevant in our model as discussed in the FSFS policy. When $h_N \neq 1_n$, the term $C_m \in \Gamma_3$ for $1\leq m \leq N$ is defined as an $m$-combination of set $\{1_n,2,\dots,N\}$ whereas we use $\{C_m\} = \{0\}$ when the buffer is empty. 	
	\item For Phase 4, we have a single state which we define as $q = (-1,(-1)) \in \mathcal{Q}_4$ similar to the FSFS policy.		
\end{itemize}
Finally, we list all the intra-phase and inter-phase transitions for the ESFS policy in Table~\ref{Table:intraPhase_ESFS} and Table~\ref{Table:interPhase_ESFS}, respectively. For intra-phase transitions, the rows 1-2, 3-4 and 5-9 refer to the transitions for Phase 1, 2 and 3, respectively. For inter-phase transitions, the rows 1, 2-3, 4, 5, and 6-7 refer to the transitions from Phase 1 to 2, Phase 2 to 3, Phase 3 to 4, Phase 4 to 2 and Phase 4 to 1, respectively. This concludes the analytical model for the ESFS policy.
\begin{table}[!htb]
	\centering
	\caption{Intra-phase transition rates $\alpha_{q,q'}$ for the ESFS policy}
		\begin{tabular}{|c|c|c|c|}
			\hline	
			$q$ 		   & $q'$                 						& $\alpha_{q,q'}$  & Condition 		\\ \hline
			
			$((H),\{C_m\})$        & $((H),\{C_m, j\})$                           			& $\lambda_j$	  & $H \in \mathcal{H}_1, j \in \mathcal{J}_1,$
			\\ & & & $j\notin \{C_m\}, C_m \in \Gamma_1$	\\ \hline
			$((H),\{C_m\})$        & $((H'_{i^*}),\{C'_m\})$                          		& $\mu_{h_N}$				  & $H \in \mathcal{H}_1,$		 	
			\\ & & & $i^* \neq 1_c, C_m \in \Gamma_1$ \\ \hline
			
			$((H),\{0\})$        & $((H),\{j\})$                          			& $\lambda_j$	  & $H \in \mathcal{H}_2, j \in \mathcal{J}_2$ \\ \hline
			$((H),\{C_m\})$        & $((H),\{C_m, j\})$                          			& $\lambda_j$	  & $H \in \mathcal{H}_2, j \in \mathcal{J}_2,$
			\\ & & & $j\notin \{C_m\}, C_m \in \Gamma_2$	\\ \hline
			
			$((H),\{-1\})$        & $((H'_i),\{0\})$                          			& $\lambda_i$	  & $H \in \mathcal{H}_3, i \in \mathcal{I}_3$			 	\\ \hline
			$((H),\{0\})$        & $((H),\{j\})$                          			& $\lambda_j$	  & $H \in \mathcal{H}_3, h_N \neq 1_n,$			 	
			\\ & & & $j \in \mathcal{J}_3$ \\ \hline
			$((H),\{C_m\})$        & $((H),\{C_m, j\})$                          		& $\lambda_j$	  & $H \in \mathcal{H}_3, h_N \neq 1_n,$
			\\ & & & $j \in \mathcal{J}_3, j\notin \{C_m\},$		 	
			\\ & & & $C_m \in \Gamma_3$ \\ \hline
			$((H),\{0\})$        & $((H),\{-1\})$                          			& $\mu_{h_N}$	  &$H \in \mathcal{H}_3, h_N \neq 1_n$		 	\\ \hline
			$((H),\{C_m\})$        & $((H'_{i^*}),\{C'_m\})$                          			& $\mu_{h_N}$	  &$H \in \mathcal{H}_3, h_N \neq 1_n$			 	
			\\ & & & $C_m \in \Gamma_3$ \\ \hline
			
		\end{tabular}
	\label{Table:intraPhase_ESFS}
\end{table}
\begin{table}[!htb]
	\centering
	\caption{Inter-phase transition rates $\beta_{q,q'}$ for the ESFS policy}
		\begin{tabular}{|c|c|c|c|}
			\hline	
			$q$ 		   & $q'$                 						& $\beta_{q,q'}$  & Condition 		\\ \hline
			
			$((H),\{C_m\})$        & $((H'_{i^*}),\{C'_m\})$                          		& $\mu_{h_N}$				  & $H \in \mathcal{H}_1, i^* = 1_c$		 	
			\\ & & & $C_m \in \Gamma_1$ \\ \hline		
			
			$((H),\{0\})$        & $((H),\{-1\})$                          		& $\mu_1$				  & $H \in \mathcal{H}_2, h_N = 1_c $ \\ \hline
			$((H),\{C_m\})$        & $((H'_{i^*}),\{C'_m\})$                         			& $\mu_1$	  & $H \in \mathcal{H}_2, h_N = 1_c $ 
			\\ & & & $C_m \in \Gamma_2$ \\ \hline		
			
			$((H),\{0\})$        & $(-1,(-1))$                          		& $\mu_1$				  & $H \in \mathcal{H}_3, h_N = 1_n $ \\ \hline
			
			$(-1,(-1))$              & $((H'_i),\{0\})$                          		& $\pi_{((H),\{-1\})}$				  & $X(t) = 0, i = 1_c$\\ \hline
			
			$(-1,(-1))$              & $((H),\{1_c\})$                           		& $\pi_{((H),\{0\})} + $				  & $X(t) = 0, H \in \mathcal{H}_1$
			\\ & & $\pi_{((H),\{1\})}$ & \\ \hline
			
			$(-1,(-1))$              & $((H),\{C_m\})$                           		& $\pi_{((H),\{C_m\})} + $				  & $X(t) = 0, H \in \mathcal{H}_1$
			\\ & & $\pi_{((H),\{C_m,1\})}$ & $1_c \notin C_m, C_m \in \Gamma_1$ \\ \hline
			
		\end{tabular}
	\label{Table:interPhase_ESFS}
\end{table}
\subsection{Single Buffer With Replacement (SBR) Policy}
Each state for $Y(t)$ is enumerated as a tuple $q = (i,(j)) \in \mathcal{Q}_Y$ where  $i \in \mathcal{I}_Y = \{0,1,\dots,N\}$ enumerates the source tag of packet that is currently being served except the case when $i=0$ which is used for enumerating the idle server. For any $i > 0$, the element $(j)$ enumerates the buffer status such that $j \in \mathcal{B}_Y = \{0,1,\dots,N\}$ indicates the source tag of packet waiting in the buffer except the case when $j=0$ which is used for enumerating the empty buffer status. When the server is idle, i.e., $i = 0$, the only possible buffer status is $(j) = (0)$ since the server can be idle only when the buffer is empty. Thus, we enumerate the idle server status as $q = (0,(0))$ which completes the state definitions of $Y(t)$ for the SBR policy. Next, we provide the transitions rates $\nu_{q,q'}$ of $Y(t)$ in Table~\ref{Table:TransitionRates_SBR} where the rows 1-3 and 4-5 correspond to the arrival and departure events, respectively. Since the buffer is shared by all sources in the SBR policy, an arrival from any source replaces the existing packet in the buffer (in contrast with the FSFS and ESFS policies) as defined in row 3.
\begin{table}[!htb]
	\centering
	\caption{Transition rates $\nu_{q,q'}$ of $Y(t)$ for the SBR policy}
		\begin{tabular}{|c|c|c|c|}
			\hline
			$q$ 		   & $q'$                 						& $\nu_{q,q'}$  & Condition 		\\ \hline
			$(0,(0))$        & $(i,(0))$                          			& $\lambda_i$	  & $i \in \mathcal{N}$			 	\\ \hline
			$(i,(0))$        & $(i,(j))$                          			& $\lambda_j$	  & $i,j \in \mathcal{N}$			 	\\ \hline
			$(i,(j))$        & $(i,(k))$                          			& $\lambda_{k}$	  & $i,j,k \in \mathcal{N}, k \neq j$ 
			\\ \hline 
			$(i,(0))$        & $(0,(0))$                          			& $\mu_i$	  & $i \in \mathcal{N}$			 	\\ \hline
			$(i,(j))$        & $(j,(0))$                          			& $\mu_i$	  & $i,j \in  \mathcal{N}$			 	\\ \hline
		\end{tabular}
	\label{Table:TransitionRates_SBR}
\end{table}
Next, we define the states $q = (i,(j)) \in \mathcal{Q}_p$ for each phase.

\begin{itemize}
	\item For Phase 1, the server status is defined as $i \in \mathcal{I}_1 = \{1_p,2,\dots,N\}$ similar to the FSFS policy. For any $i$ value, the buffer status can be only $(j) = (1_c)$ since the buffer always contains the packet $1_c$ in Phase 1.
	\item In Phase 2, the server may only serve the packet $1_c$, i.e., $i = 1_c$, and the buffer may contain the packet $1_n$ or not from the definition of Phase 2. Thus, we define the buffer status as $(j)$, $j \in \{0,1_n,2,\dots,N\}$, in Phase 2.
	\item For Phase 3, the server status is defined as $ i \in \mathcal{I}_3 = \{0, 1_n,2,\dots,N\}$ similar to the FSFS policy. When $i = 0$, the only possible buffer status is $(j) = (0)$ for which we have the idle server status. When $i = 1_n$, the only possible buffer status is also $(j) = (0)$ since any to-be-served packet after the packet $1_n$ is irrelevant in our model as discussed for the FSFS policy. For any other $i$ value, the buffer may be empty or hold a packet from any source for which we define the buffer status as $(j)$, $j \in \{0,1_n,\dots,N\}$, similar to Phase 2.
	\item For Phase 4, we have a single state which we define as $q = (-1,(-1)) \in \mathcal{Q}_4$ similar to the FSFS policy.	
\end{itemize}

Finally, we list all the intra-phase and inter-phase transitions for the SBR policy in Table~\ref{Table:intraPhase_SBR} and Table~\ref{Table:interPhase_SBR}, respectively. For intra-phase transitions, the rows 1-2 and 3-6 refer to the transitions for Phase 2 and 3, respectively. In contrast with the FSFS and ESFS policies, there is no intra-phase transition defined for Phase 1 since the first packet arrival from any source replaces the packet $1_c$ in the buffer which results in a direct transition to Phase 4 for the SBR policy. For inter-phase transitions, the rows 1, 2, 3-4, 5, 6 and 7-8 refer to the transitions from Phase 1 to 2, Phase 1 to 4, Phase 2 to 3, Phase 3 to 4, Phase 4 to 2 and Phase 4 to 1, respectively. The last row corresponds to the case where the packet $1_c$ finds the server busy upon its arrival to the system in which case it replaces the packet in the buffer irrespective of its source as opposed to to FSFS and ESFS policies. Thus, out of Phase 4, the system transitions into the state $q=(i,(1_c))$ with rate $\sum\limits_{j \in \mathcal{B}_Y} \pi_{(i,(j))}$ that is the sum of steady-state probabilities of all states in $Y(t)$ where the source-$i$ packet is being served. This concludes the analytical model for the SBR policy.
\begin{table}[!h]
	\centering
	\caption{Intra-phase transition rates $\alpha_{q,q'}$ for the SBR policy}
		\begin{tabular}{|c|c|c|c|}
			\hline	
			$q$ 		   & $q'$                 						& $\alpha_{q,q'}$  & Condition 		\\ \hline
			
			
			$(1_c,(0))$        & $(1_c,(j))$                          			& $\lambda_j$	  & $j \in \mathcal{J}_2$ \\ \hline
			$(1_c,(j))$        & $(1_c,(k))$                          			& $\lambda_k$	  & $j,k \in \mathcal{J}_2, k \neq j $ \\ \hline
			
			$(0,(0))$        & $(i,(0))$                          			& $\lambda_i$	  & $i \in \mathcal{I}_3-\{0\}$			 	\\ \hline	
			$(i,(0))$        & $(i,(j))$                          			& $\lambda_j$	  & $i \in \mathcal{I}_3-\{0, 1_n\},j \in \mathcal{J}_3$			 	\\ \hline	
			
			$(i,(0))$        & $(0,(0))$                          			& $\mu_i$	  & $i \in \mathcal{I}_3-\{0, 1_n\}$		 	\\ \hline
			$(i,(j))$        & $(j,(0))$                          			& $\mu_i$	  & $i \in \mathcal{I}_3-\{0\}, j \in \mathcal{J}_3$			 	\\ \hline 

		\end{tabular}
	\label{Table:intraPhase_SBR}
\end{table}
\begin{table}[!h]
	\centering
	\caption{Inter-phase transition rates $\beta_{q,q'}$ for the SBR policy}
		\begin{tabular}{|c|c|c|c|}
			\hline	
			$q$ 		   & $q'$                 						& $\beta_{q,q'}$  & Condition 		\\ \hline
			
			$(i,(1_c))$              & $(1_c,(0))$                          		& $\mu_i$				  & $i \in \mathcal{I}_1$ \\ \hline
			
			$(i,(1_c))$              & $(-1,(-1))$                          		& $\lambda_j$				  & $i \in \mathcal{I}_1, j \in \mathcal{J}_1$ \\ \hline
			
			$(1_c,(0))$              & $(0,(0))$                          		& $\mu_1$				  & \\ \hline
			$(1_c,(j))$        & $(j,(0))$                          			& $\mu_1$	  & $j \in \mathcal{J}_2$ \\ \hline		
			
			$(1_n,(0))$              & $(-1,(-1))$                          		& $\mu_1$				  & \\ \hline
			
			$(-1,(-1))$              & $(1_c,(0))$                          		& $\pi_{(0,(0))}$				  & $X(t) = 0$\\ \hline
			
			
			$(-1,(-1))$              & $(i,(1_c))$                          		& $\sum\limits_{j \in \mathcal{B}_Y} \pi_{(i,(j))} $				  & $X(t) = 0, i \in \mathcal{I}_1,$ \\ \hline 
			
		\end{tabular}
	\label{Table:interPhase_SBR}
\end{table}

\subsection{Computational Considerations for the Analytical Models}
In this subsection, a comparison of the computational cost of the MFQ-based analytical model for each policy is provided. Note that, this is different than the complexity of implementing the actual policies on the server.
For this comparison, we report the size of the square matrix $\boldsymbol{W}_{L \times L}$ whose inversion is required for obtaining the average AoI for each source in \eqref{eq:moments} or the matrix exponential function of $\boldsymbol{W}$ is needed to obtain the age violation probabilities in \eqref{equation:finalPDF}. Clearly, we have $L = |\mathcal{Q}_1| + |\mathcal{Q}_2| + |\mathcal{Q}_3|$ which corresponds to the number of states in $\mathcal{S}$ with positive drift where the notation $|\cdot|$ is used to denote the cardinality of the argument set. The values of $L$ for each policy, denoted by $L_{SBR}$, $L_{FSFS}$, and $L_{ESFS}$, respectively, are listed in Table~\ref{Table:compComplexity} when the number of sources ranges between 2 and 5. We observe that as the number of sources increases, the size of the matrix $\boldsymbol{W}$ grows significantly faster with the ESFS and FSFS policies than the SBR policy which subsequently limits the number of sources that can be analyzed with the MFQ technique when the computational resources are limited. In fact, it is observed that the MATLAB implementation for the MFQ analysis given in Section~\ref{section:SystemModel} is feasible with personal computers when the number of sources is less than or equal to 5. When the number of sources increases further, further computational capabilities might be needed. However, we note that the proposed technique is computationally stable.
\begin{table}[!ht]
	\centering
	\caption{The values of $L_{SBR}$, $L_{FSFS},$ and $L_{ESFS}$, when the number of sources $N$ ranges between 2 and 5.}
	\begin{tabular}{|c|c|c|c|c|}
		\hline
		& $N$ = 2 & $N$ = 3 & $N$ = 4 & $N$ = 5 \\ \hline
		$L_{SBR}$  &   10    &   17    &    26    &   37      \\ \hline
		$L_{FSFS}$ &   16    &   65    &   326    &   1957    \\ \hline
		$L_{ESFS}$ &   15    &   80    &   606    &   5904    \\ \hline
	\end{tabular}
	\label{Table:compComplexity}
\end{table}

\section{Numerical Examples}
\label{section:NumericalExamples}
In this section, the proposed analytical model is first verified with simulations for each policy. Subsequently, the analytical model is used to compare the three studied policies under several scenarios.
\subsection{Model Validation}
We consider a scenario where $N$ = 4 for which the arrival and service rate vectors are assumed to be (1,2,3,2) and (3,1,2,4) packets/sec, respectively. The cumulative distribution function (CDF) of the AoI for each source-$n$, denoted by $F_{\Delta_n}(x)$, is shown in Fig.~\ref{figure:CDF} for each policy using both the analytical model and simulations. We observe that the analytical results are perfectly in line with the simulation results. Therefore, for the rest of the paper, we will only use the proposed analytical model for evaluating the policies.
\begin{figure*}[tbh]
	\centering
	\begin{subfigure}{.32\textwidth}
		\centering
		\includegraphics[width=\linewidth]{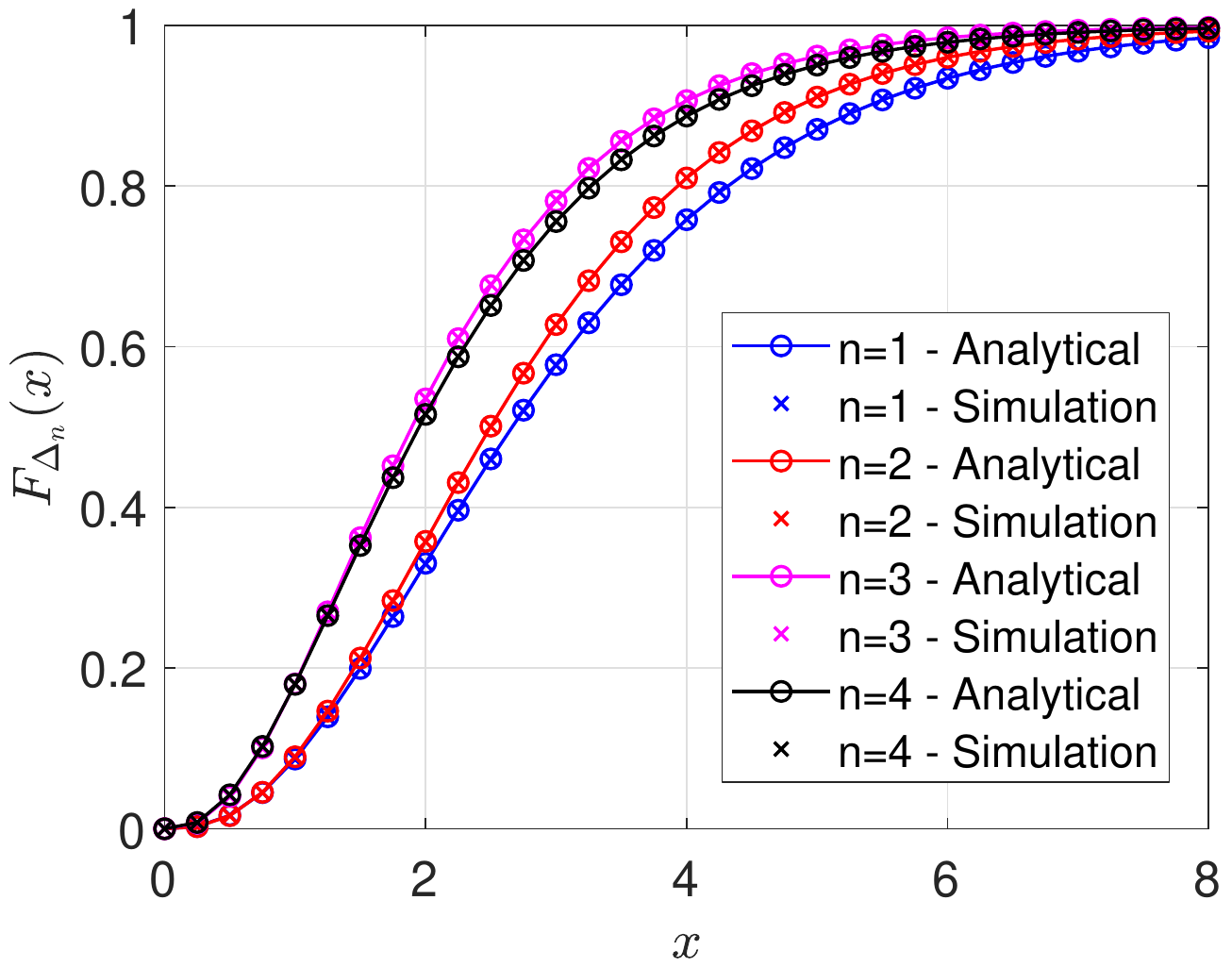}
		\caption{First Source First Serve (FSFS)}	
		\label{figure:sim_P_Tp}
	\end{subfigure}\hfill
	\begin{subfigure}{.32\textwidth}
		\centering
		\includegraphics[width=\linewidth]{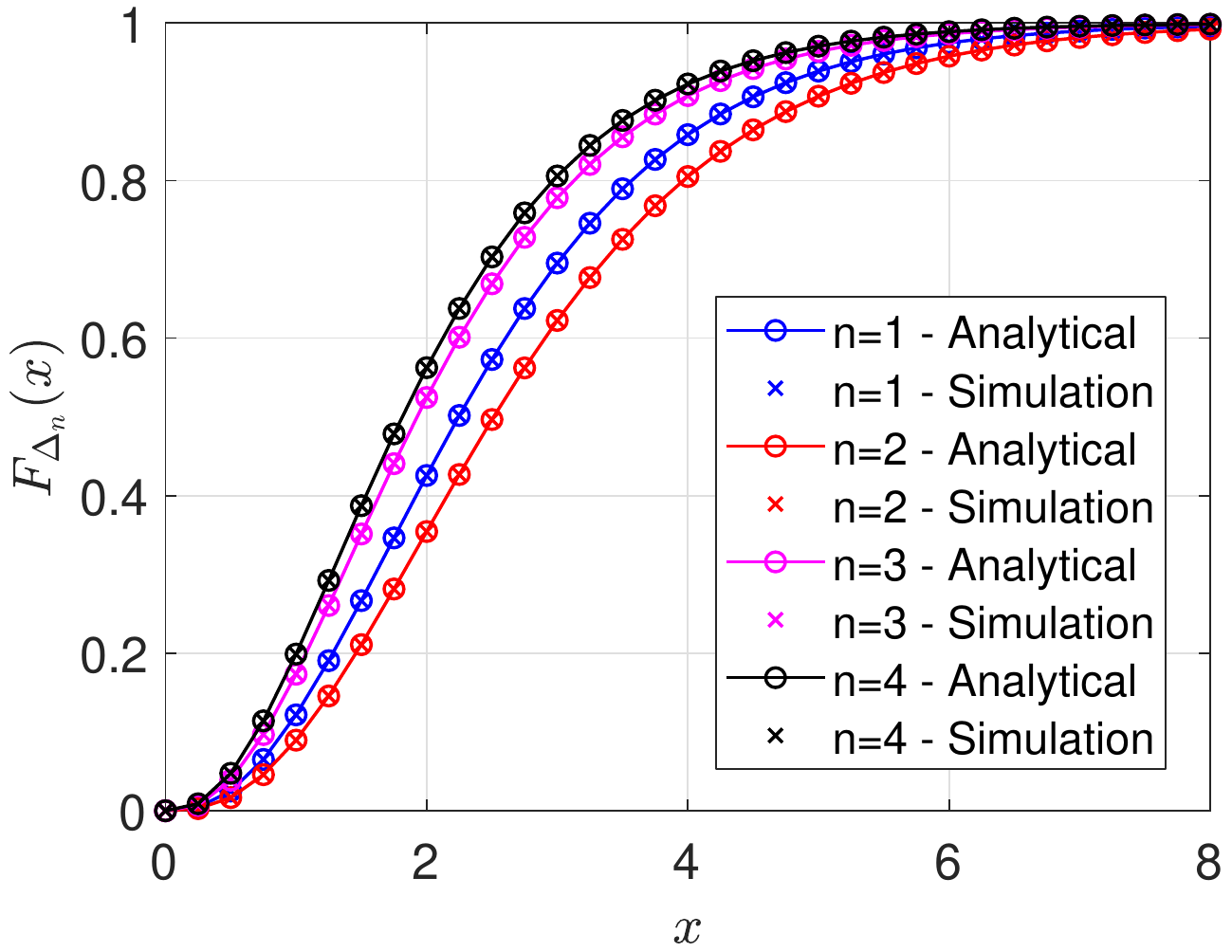}	
		\caption{Earliest Served First Serve (ESFS)}
	\end{subfigure}\hfill%
	\begin{subfigure}{.32\textwidth}
		\centering
		\includegraphics[width=\linewidth]{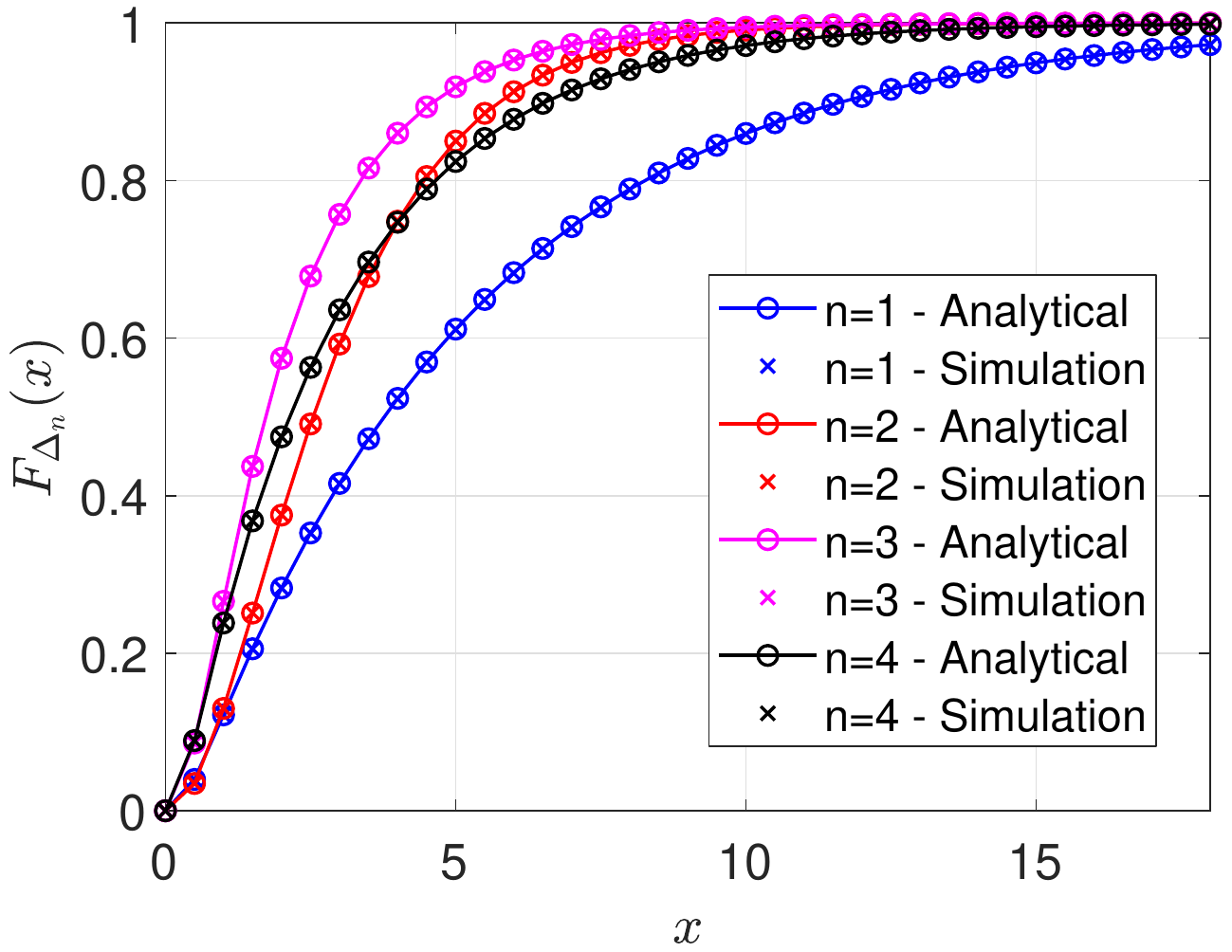}
		\caption{Single Buffer with Replacement (SBR)}
		\label{figure:sim_pv_Tp}
	\end{subfigure}
	\caption{The CDF $F_{\Delta_n}(x)$ of each source for the FSFS, ESFS, and SBR policies when the number of sources $N=4$, and the arrival and the service rate vectors are (1,2,3,2) and (3,1,2,4) packets/sec, respectively.}
	\label{figure:CDF}
\end{figure*}

\subsection{Comparative Assessment of the Scheduling Policies}
In this subsection, the performance of the studied policies are evaluated with respect to the average AoI and average age violation probability metrics under several scenarios where the sources may have identical or different traffic intensities, referred to as balanced and unbalanced load scenarios, respectively. We assume the service rate is common and equal to one for all sources, i.e., $\mu_n = 1, \forall n \in \mathcal{N}$, for all the numerical examples.

\subsubsection{Balanced Load}
In this subsection, we consider a scenario where the load is balanced among all sources such that the arrival rate for each source is given as $\lambda_n = \rho / N, \forall n \in \mathcal{N}$. We sweep the number of sources from 3 to 5 for which we obtain the average AoI for each policy with respect to the system load $\rho$ as shown in Fig.~\ref{figure:balancedFig}. We observe that the ESFS policy consistently outperforms the other two policies in moderate loads with FSFS being slightly worse for all the three cases. Moreover, the performance gaps between the policies grow as the number of sources increases. This shows the effectiveness of selecting the source that is not served for the longest duration as opposed to considering first packet arrival times of FSFS. Lastly, as the system load increases towards infinity, the average AoI for the ESFS and FSFS policies become identical as expected since both policies behave the same, i.e., round-robin service, when there is always a packet (in the waiting room) for each source upon a service completion.
\begin{figure}[tbh]
	\centering{
		\includegraphics[width=0.75\linewidth]{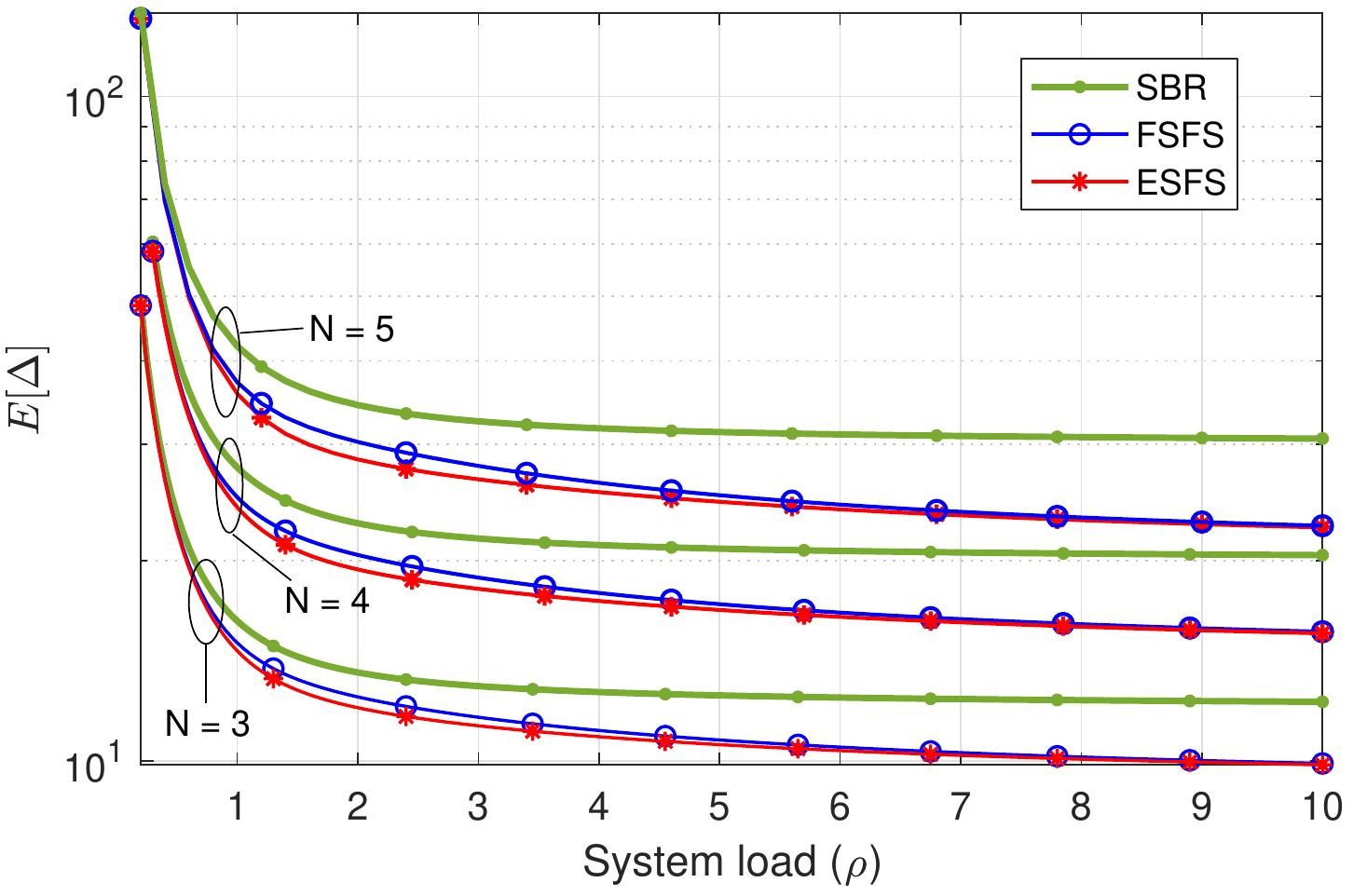}
	}
	\caption{The average age $E[\Delta]$ obtained with the SBR, FSFS, and ESFS policies as a function of the system load $\rho$ when all sources have same traffic intensities.}
	\label{figure:balancedFig}
\end{figure}

In the next example, we evaluate the studied policies with respect to the average age violation probability metric under two system loads. Specifically, the low and moderate load scenarios are considered where the parameter $\rho$ for each case is assumed to be 0.5 and 4, respectively. For both scenarios, the average age violation probability with respect to the age threshold parameter $\gamma$ is depicted in Fig.~\ref{figure:ageVio} for all three policies. We observe that when the system load is low, FSFS and ESFS policies perform quite close to each other with a slightly better performance than SBR policy whereas the performance gap grows in the moderate load. Moreover, the ESFS policy outperforms the other two policies in both scenarios. Lastly, as the system load increases, we observe that the average age violation probability can be reduced significantly faster with SBPSQ policies than it can be achieved with the SBR policy.

\begin{figure}[htb]
	\centering{
		\includegraphics[width=0.75\linewidth]{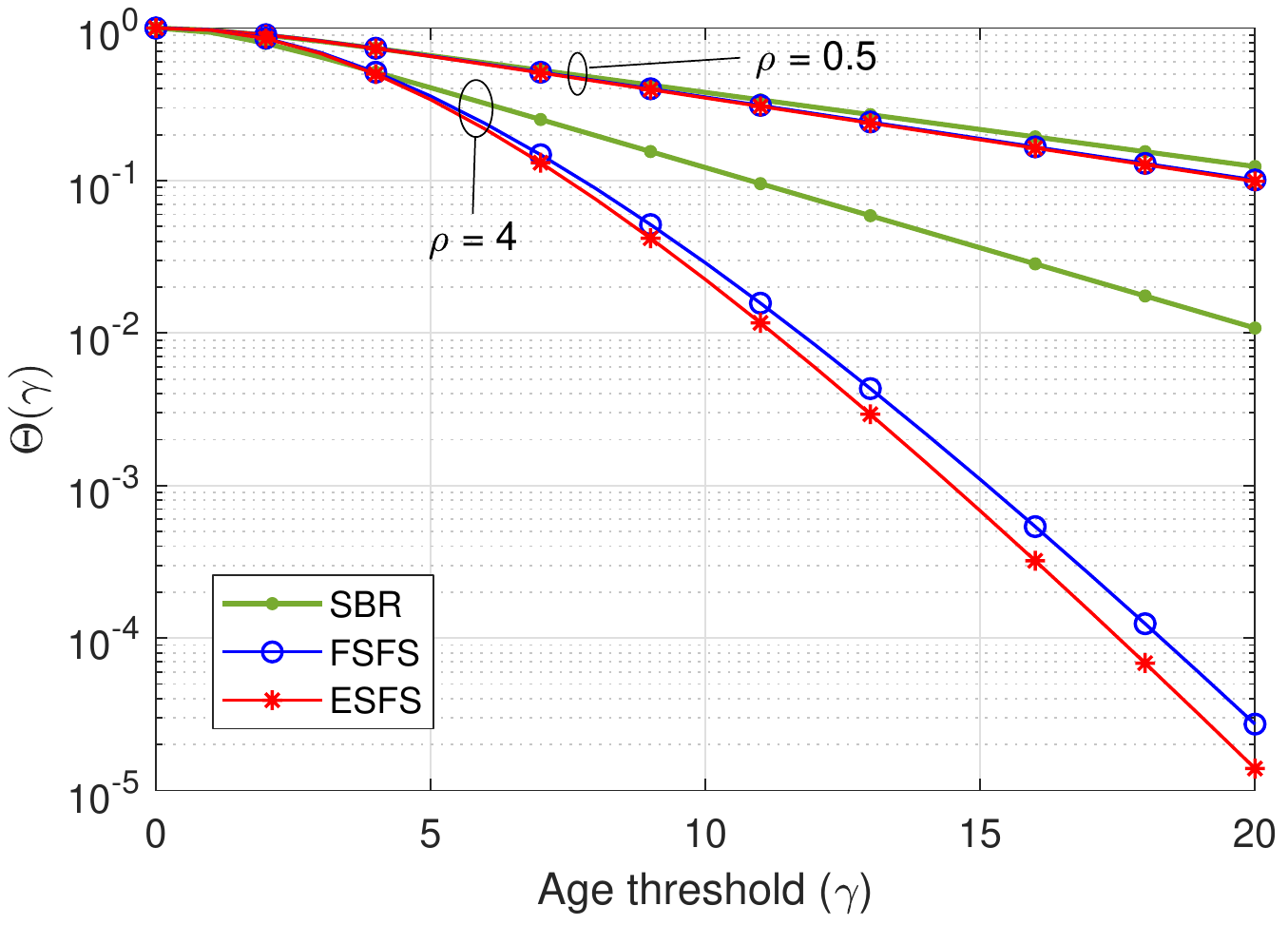}
	}
	\caption{The average age violation probability $\Theta(\gamma)$ for the SBR, FSFS, and ESFS policies as a function of the age threshold parameter $\gamma$ when there are $N$ = 4 sources with identical traffic intensities.}
	\label{figure:ageVio}
\end{figure}

\subsubsection{Unbalanced Load}
In this subsection, we study a scenario where the sources may have different traffic intensities given a fixed system load. We assume the number of sources $N$ = 2 for which the average AoI with respect to the source-1 load ratio, defined as $\rho_1 / \rho$, is given in Fig.~\ref{figure:unbalanced} for the low and moderate load scenarios (where we sweep $\rho_1$ from $\rho/2$ to $\rho$ due to symmetry). In the low load scenario, we observe that all three policies perform close to each other with SBR policy being slightly worse. In the moderate system load, the average AoI worsens with remarkably slower rate for SBPSQ policies than SBR policy as the load asymmetry between the sources increases. Morever, we observe that the ESFS policy consistently outperforms FSFS and SBR policies for any $\rho_1$ value under both system loads. This shows the effectiveness of the ESFS policy also under  scenarios with different traffic mixes.

\begin{figure}[htb]
	\centering{
		\includegraphics[width=0.75\linewidth]{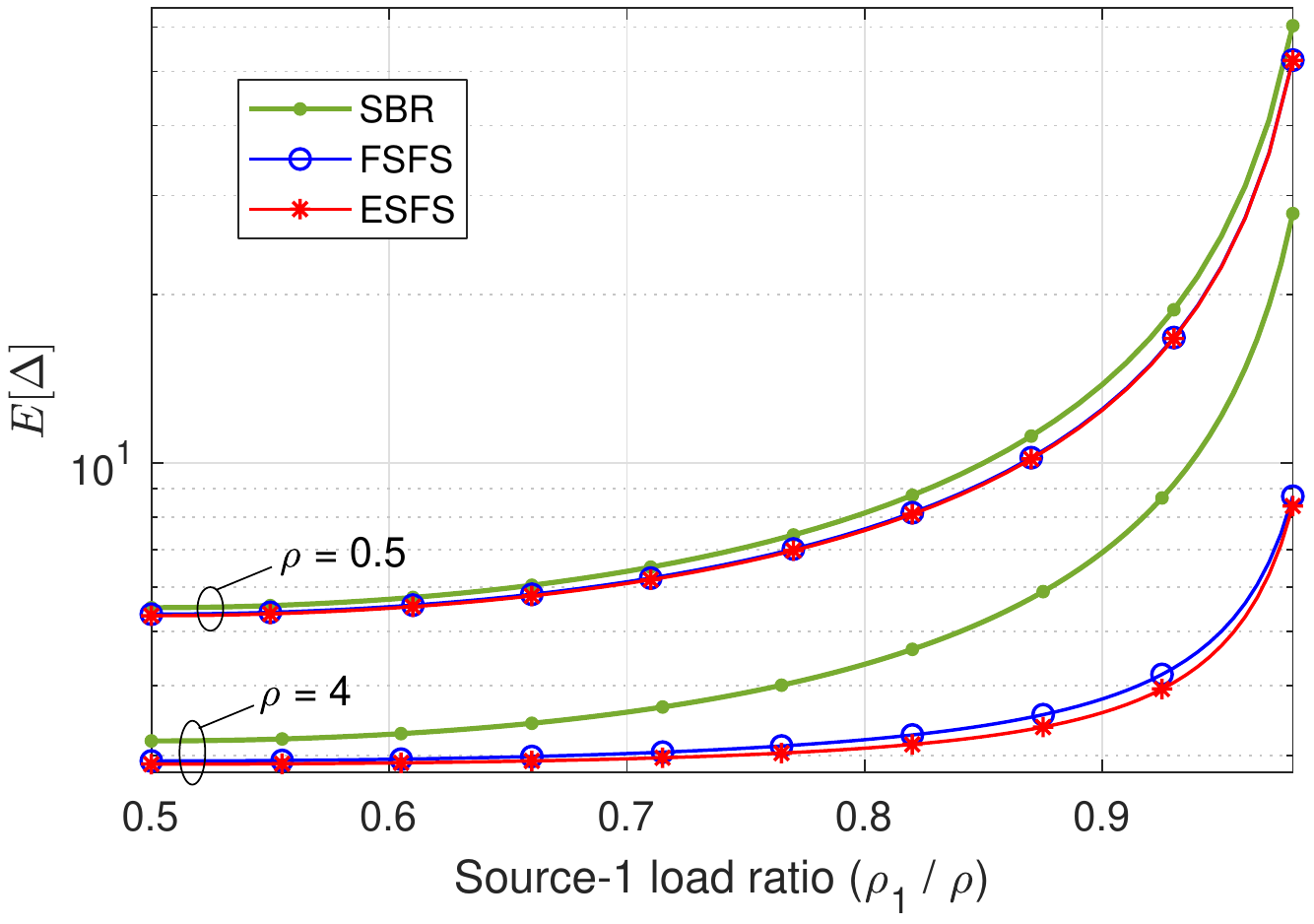}
	}
	\caption{The average age $E[\Delta]$ obtained with the SBR, FSFS and ESFS policies as a function of the source-1 load ratio ($\rho_1 / \rho$) when the number of sources $N$ = 2.}
	\label{figure:unbalanced}
\end{figure}

\section{Conclusions}
\label{section:Conclusions}
In this paper, we study a multi-source information update system where the sources send status updates to a remote monitor through a single server. Under the assumption of Poisson packet arrivals and exponentially distributed heterogeneous service times for each source, we propose and validate an analytical model to obtain the exact steady-state distributions of the AoI process for each source under several queueing policies. The average AoI and the average age violation probabilities are then easily calculated from the obtained distributions which are in matrix exponential form. In the numerical examples, we evaluated the studied policies for several scenarios under a common service time distribution with varying system loads and different traffic mixes. We show that the proposed ESFS policy which is age-agnostic and simple-to-implement, consistently outperforms the other two studied policies where the degree of outperformance with respect to FSFS being modest. Furthermore, when SBPSQ policies are employed at the server, the performance improvement with SBPSQ policies over SBR increases with higher loads and also when the load asymmetry among the sources increases. Future work will consist of practical scheduling policies for non-symmetric networks with heterogeneous service times when the minimization of weighted average AoI is sought. 



\end{document}